\title{Quantum Mechanical Effects from Deformation Theory} 
\author{A. Much\\ \footnotesize{Max-Planck-Institute for Mathematics in the Sciences,
04103 Leipzig, Germany}
\\ \footnotesize{Institute for Theoretical Physics, University of Leipzig, 04009 Leipzig,
Germany}}
\newtheorem{theorem}{\textsc{Theorem}}[section]
\newtheorem{lemma}{\textsc{Lemma}}[section]
\newtheorem{proposition}{\textsc{Proposition}}[section]
\theoremstyle{definition}
\newtheorem{definition}{\textsc{Definition}}[section]
\theoremstyle{remark}
\newtheorem{remark}{Remark}[section]
 \numberwithin{equation}{section} 
\begin{document}
\maketitle
\abstract{We consider deformations of quantum mechanical operators by using the
novel construction tool of warped convolutions. The deformation enables us to
obtain several quantum
mechanical effects where electromagnetic and gravitomagnetic fields play a role. 
Furthermore, a quantum plane can be defined by using the deformation techniques. This in turn gives an experimentally verifiable effect.  } 
  \tableofcontents

\section{Introduction}
Deformation theory is an interesting subject of research, both from a mathematical and  a
physical point of view. Nowadays, many fundamental theories are reconsidered as deformations of
more subtle theories. A fundamental example of deformation theory in a
physical context is the deformation  of classical mechanics to quantum physics, where the
deformation parameter in that case is Planck's constant $\hbar$. 
In this context the Poincar\'e group can also be considered as a deformation of
the Galilei group, where the  parameter characterizing the deformation is given by the speed of
light, i.e. $1/c^2$. The opposite of a deformation in a group theoretical context is a
contraction. It is induced by taking the limit of the deformation parameter   to zero. In
the example of the Poincar\'e group, this would mean that we take the limit $1/c^2\rightarrow 0 $.
This limit is often taken by physicists  as a consistency check and rarely recognized  as a contraction. Another interesting example is the deformation of the Poincar\'e group to the Anti-de Sitter group by using 
the cosmological constant $\Lambda$ as a deformation parameter. \newline\newline
Thus, from a physical point of view, deformation theory enters the game by the physical
dimensionality of the deformation parameter. In this work, we emphasize the importance of choosing
the \textbf{deformation constant}, in order to obtain \textbf{physical effects}.
\newline\newline
One justified critique usually spoken out in the context of deformation theory  is that the rightful deformation is only
guessed \textbf{after} the physical theory has been formulated. Thus, to consider such deformations as
fundamental, is often put into the category of wishful thinking of theoretical physicists.
Therefore, the main aim of the current work is to understand  a variety of physical effects,
in a quantum mechanical context, by a deformation of the free theory. Furthermore, we propose an effect coming from
deformation considerations.
\newline\newline
The method that is used, in the current work, for deformation is known under the name of warped convolutions,  \cite{GL1, BS, BLS}.  Usually, this method
is used in the realm of quantum field theory to deform free quantum fields and to construct
non-trivial interacting fields which was done in \cite{A, GL1, GL2, GL4, GL5, Mor, MUc}. It was also used in quantum measurement theory \cite{AA}. One of the major advantages of this method   is its easy accessibility to a physical regimen. \newline\newline By using this novel tool in a quantum mechanical context, we recast many fundamental physical effects involving electromagnetism. This is done by the adjustment of the deformation parameter.  Moreover, we are able to produce gravitomagnetic effects and interaction between magnetic and gravitomagnetic fields by this  deformation procedure. \newline\newline
This paper is organized as follows: In Section \ref{s2} we give a brief introduction of the method of warped convolutions and introduce the basic notations for  deformation in a quantum mechanical context. The free Hamiltonian is deformed in Section \ref{s3}. We are obliged to show that the warped convolutions formula, originally formulated for a subset of bounded operators, is well-defined in the case of the deformation of unbounded operators. Section \ref{s4} is devoted to the emergence of physical effects from the deformation procedure.

 \section{Warped convolutions in QM}\label{s2}
   Since we constantly use warped convolutions we lay out the novel deformation procedure  in this section
and present the most important definitions, lemmas and propositions for the current paper.
For proofs of the lemmas and propositions we refer the reader to the original works. 
\newline\newline 
We start by assuming the
existence of a strongly continuous unitary group $U$ that is a representation of the additive
group $\mathbb{R}^n$, on some separable Hilbert space $\mathscr{H}$. 
Let $\mathcal{D}$ be the dense domain of vectors in $\mathscr{H}$ which transform
smoothly under the adjoint action of $U$.
Then, the warped convolutions  for operators  $F\in C^{\infty}$, where $C^{\infty}$ 
is the *-algebra of smooth elements with respect to the adjoint  action of $U$, are given by
the following definition.\\
\begin{definition}
 Let $B$ be a real skew-symmetric matrix on $\mathbb{R}^n$, let $F\in C^{\infty}$ and let $E$ be
the spectral resolution of the unitary operator $U$.  Then, the corresponding warped convolution $F_{B}$
of $F$ is defined on the domain $\mathcal{D}$ according to
\begin{equation}\label{WC}
 F_{B}:=\int \alpha_{Bx}(F)dE(x),
\end{equation}
where $\alpha$ denotes the adjoint action of $U$ given by $\alpha_{k}(F)=U(k)\,F\,U(k)^{-1}$.
\end{definition}$\,$\newline
The restriction in the choice of operators is owed to the fact that the deformation is performed
with operator valued integrals. Furthermore,  one can represent the warped
convolution of $
 {A} \in \mathcal{C}^{\infty}$    by $\int \alpha_{B x}(A) dE(x)$ or $\int dE(x)
\alpha_{B x}(A) $, on the dense domain
$\mathcal{D}\subset\mathscr{H}$ of vectors smooth w.r.t. the action of $U$,  in terms of
strong limits 
\begin{equation*}
\int\alpha_{B x}(A) dE(x)\Phi=(2\pi)^{-n}
\lim_{\epsilon\rightarrow 0}
\int\int d^{n}x\, d^{n}y \,\chi(\epsilon x,\epsilon y )\,e^{-ixy}\, U(y)\, \alpha_{B x}(A)\Phi,  
\end{equation*}
where $\chi \in\mathscr{S}(\mathbb{R}^n\times\mathbb{R}^n)$ with $\chi(0,0)=1$.
This representation makes calculations and proofs concerning the existence of  integrals
easier. In this work we  use both representations. \newline\newline 
In the following lemma we introduce the deformed product, also known as the Rieffel product
\cite{RI} 
by using warped convolutions. The two deformations are interrelated since   warped convolutions supply isometric
representations of Rieffel's strict deformations of $C^{*}$-dynamical systems with
actions of $\mathbb{R}^n$.
 \newline
\begin{lemma}\label{l2.1}
Let $B$ be a real skew-symmetric matrix on $\mathbb{R}^n$ and let $  {A},   {E} \in
\mathcal{C}^{\infty}$. Then 
\begin{equation*}
 {A}_{B}   {E}_{B} \Phi= (A\times_{B }E)_{B}\Phi, \qquad
\Phi\in\mathcal{D}.
\end{equation*}
where $\times_{B}$ is known as the Rieffel product
on $\mathcal{C}^{\infty}$ and is given by, 
\begin{equation}\label{dp0}
(A\times_{B}E )\Phi=(2\pi)^{-n}
\lim_{\epsilon\rightarrow 0}
\int\int d^{n}x\, d^{n}y \,\chi(\epsilon x,\epsilon y )\,e^{-ixy} \, \alpha_{B x}(A)\alpha_{y}(E)\Phi.
\end{equation}
\end{lemma}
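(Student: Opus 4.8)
The plan is to prove the identity by reducing both sides to one and the same iterated integral against the spectral resolution $E$, exploiting the skew-symmetry of $B$ to kill the ``diagonal'' phase factors that would otherwise obstruct the comparison. Throughout I work on the dense domain $\mathcal{D}$ of smooth vectors and use two facts: that $U$ commutes with its own spectral measure, so that $U(a)\,dE(q)=e^{\,i a\cdot q}\,dE(q)$ by the spectral theorem; and that, as already noted in the excerpt, on $\mathcal{D}$ the warped convolution may be written with either ordering, $F_{B}=\int\alpha_{Bx}(F)\,dE(x)=\int dE(x)\,\alpha_{Bx}(F)$.

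First I would compute the left-hand side $A_{B}E_{B}$ directly. Choosing the ordering $A_{B}=\int dE(p)\,\alpha_{Bp}(A)$ for the first factor and $E_{B}=\int\alpha_{Bq}(E)\,dE(q)$ for the second, the product becomes $\iint dE(p)\,\alpha_{Bp}(A)\,\alpha_{Bq}(E)\,dE(q)$. The decisive simplification comes from skew-symmetry: since $(Bp)\cdot p=0$, one has $dE(p)\,U(Bp)=dE(p)$ and $U(-Bq)\,dE(q)=dE(q)$, whence $dE(p)\,\alpha_{Bp}(A)=dE(p)\,A\,U(-Bp)$ and $\alpha_{Bq}(E)\,dE(q)=U(Bq)\,E\,dE(q)$. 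Combining the two intermediate unitaries into $U(-Bp)U(Bq)=U(B(q-p))$ yields the compact intermediate form
\[
A_{B}E_{B}=\iint dE(p)\,A\,U\!\big(B(q-p)\big)\,E\,dE(q),
\]
which, upon expanding $U(B(q-p))=\int e^{\,i B(q-p)\cdot k}\,dE(k)$, becomes the triple spectral integral $\iiint e^{\,i B(q-p)\cdot k}\,dE(p)\,A\,dE(k)\,E\,dE(q)$.

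Next I would treat the right-hand side $(A\times_{B}E)_{B}$ in the same spirit. Inserting the definition \eqref{dp0} of the Rieffel product together with the outer action $\alpha_{Bs}$ produces, before the regularization is removed, the expression $\int dE(s)\iint d^{n}u\,d^{n}v\,\chi\,e^{-iuv}\,\alpha_{B(s+u)}(A)\,\alpha_{Bs+v}(E)$. Carrying out the $v$-integration against $U(v)$ turns the unregularized kernel into a $\delta$-distribution that collapses one spectral variable, and a second use of the skew-symmetry identity $(Bu)\cdot u=0$ removes the remaining diagonal phases. I expect the outcome to be precisely the triple spectral integral obtained for the left-hand side, with the antisymmetry of the bilinear form $\omega(a,b):=(Ba)\cdot b=-\omega(b,a)$ accounting for the bookkeeping of the phase factors; matching the two then establishes the claimed equality on $\mathcal{D}$.

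The main obstacle is not the algebra but the analysis: all of the integrals above are operator-valued oscillatory integrals that converge only once the Schwartz cutoff $\chi$ is inserted, so every interchange of the order of integration and every identification of a $\delta$-distribution must be performed at fixed $\epsilon>0$, where the regularized integrands are absolutely convergent and Fubini applies. The regularization is then removed by letting $\epsilon\to0$ and using $\chi(0,0)=1$, together with a dominated-convergence argument whose bounds come from the smoothness of $A,E\in\mathcal{C}^{\infty}$ and of the vector $\Phi\in\mathcal{D}$. I would finally verify that each manipulation maps $\mathcal{D}$ into itself, so that the strong limits defining the warped convolutions exist at every stage.
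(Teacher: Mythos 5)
A point of reference first: this paper contains no proof of Lemma \ref{l2.1}. Section \ref{s2} states explicitly that the lemmas and propositions quoted there are taken from the original works on warped convolutions (Buchholz--Summers, Buchholz--Lechner--Summers, Rieffel), to which the reader is referred for proofs. So your proposal can only be measured against the literature's argument, not against anything in this paper --- and measured that way, your strategy is the standard one. The crux identity you isolate, $dE(p)\,U(Bp)=e^{i(Bp)\cdot p}\,dE(p)=dE(p)$, valid precisely because $B$ is skew-symmetric, is exactly the mechanism the original proof runs on, and your reduction of the left-hand side to the intermediate form $A_{B}E_{B}=\iint dE(p)\,A\,U\bigl(B(q-p)\bigr)\,E\,dE(q)$ is correct (modulo the analytic caveats you flag).

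The genuine gap is on the right-hand side: you never carry out the computation of $(A\times_{B}E)_{B}$; the sentence ``I expect the outcome to be precisely the triple spectral integral obtained for the left-hand side'' replaces the decisive step of the proof by a conjecture, and that matching is the entire content of the lemma. Concretely, with your ordering $\int dE(s)\,\alpha_{Bs}(\cdot)$ for the outer deformation, after the $v$-integration enforces $k=u+m$ one is left with three phases, $e^{-iBu\cdot m}$, $e^{-iBs\cdot m}$ and $e^{iBu\cdot s}$ (the last from commuting $U(Bu)$ through $dE(s)$); only after substituting $p=u+m$ and using skew-symmetry a further time (note $e^{-iBs\cdot m}\,e^{-iBm\cdot s}=1$) do they collapse to $e^{iBp\cdot(s-m)}$, which agrees with the left side's phase $e^{iB(q-p)\cdot k}$ under the relabeling $(p,k,q)\mapsto(s,p,m)$. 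This bookkeeping does close, so your approach would not fail --- but as written the proof stops exactly where the lemma begins. The same applies to what you correctly call the main obstacle: in the original works, the existence of these operator-valued oscillatory integrals in the strong topology on $\mathcal{D}$, the fixed-$\epsilon$ Fubini interchanges, and the cutoff-independence of the limits make up the bulk of the argument, and your proposal describes that program without executing any part of it. In short: right approach, correct first half, but what you have is a plan of the standard proof rather than a proof.
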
$\,$\newline
Another proposition that seems a matter of technicality in the original work but has great
physical significance  is the following.
\begin{proposition}\label{wc2}
 Let $B_1$, $B_2$ be skew symmetric matrices. Then 
\begin{equation}
 \left(A_{B_{1}}\right)_{B_{2}}=A_{B_{1}+B_{2}},\qquad A\in C^{\infty}.
\end{equation}
\end{proposition}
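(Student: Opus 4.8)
The plan is to express the twice-warped operator $(A_{B_1})_{B_2}$ as a single iterated integral against the spectral measure $E$ and then to collapse it to the diagonal, where the two integration variables coincide with a point of the joint spectrum of $U$; there the two warping matrices simply add. The whole argument rests on one covariance identity together with the orthogonality of the projection-valued measure $E$.

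First I would record the covariance of the warped convolution under the adjoint action. For any $a\in\mathbb{R}^n$ and $A\in C^{\infty}$,
\begin{equation*}
\alpha_a(A_{B}) = U(a)\Big(\int \alpha_{Bx}(A)\,dE(x)\Big)U(a)^{-1} = \int \alpha_{a+Bx}(A)\,dE(x),
\end{equation*}
where the last equality uses that $U(a)$ commutes with the spectral resolution $E$ of $U$, so that $U(a)\,dE(x)\,U(a)^{-1}=dE(x)$, together with the fact that $\alpha$ represents the abelian group, $\alpha_a\alpha_{Bx}=\alpha_{a+Bx}$. Specializing $a=B_2y$ then gives $\alpha_{B_2y}(A_{B_1})=\int \alpha_{B_1x+B_2y}(A)\,dE(x)$.

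Next I would insert this into the definition of the outer warping to obtain the iterated integral
\begin{equation*}
(A_{B_1})_{B_2}= \int \alpha_{B_2y}(A_{B_1})\,dE(y) = \int\!\!\int \alpha_{B_1x+B_2y}(A)\,dE(x)\,dE(y).
\end{equation*}
The key step is to show that this double integral collapses to the diagonal $x=y$, producing $\int \alpha_{(B_1+B_2)x}(A)\,dE(x)=A_{B_1+B_2}$. Heuristically this is the orthogonality $E(\Delta_1)E(\Delta_2)=E(\Delta_1\cap\Delta_2)$, which pins both spectral variables to the same point of the joint spectrum. To make this precise I would pass to the regularized representation with the cutoff $\chi$: writing both warpings through $(2\pi)^{-n}\lim_{\epsilon\to 0}\int d^nx\,d^ny\,\chi(\epsilon x,\epsilon y)\,e^{-ixy}\,U(y)\,\alpha_{Bx}(A)$ and using $U(l)U(k)=U(l+k)$, the two exponential factors $e^{-ixk}$ and $e^{-iyl}$ act as regularized Fourier kernels against $U(k+l)=\int e^{i(k+l)p}\,dE(p)$, each delta-localizing one integration variable onto the spectral variable $p$. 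Evaluating on the diagonal replaces $B_1x+B_2y$ by $(B_1+B_2)p$, which is the assertion.

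The main obstacle is exactly this last step: justifying the collapse rigorously rather than formally. One must control the iterated oscillatory integrals on the dense domain $\mathcal{D}$ of smooth vectors, interchange the two regularizing limits in $\epsilon$, and accommodate the unboundedness of $A$ — here the smoothness of $A$ and of the vectors in $\mathcal{D}$, together with the Schwartz decay supplied by $\chi\in\mathscr{S}(\mathbb{R}^n\times\mathbb{R}^n)$, is what guarantees strong convergence. Once the double integral is shown to exist as a strong limit on $\mathcal{D}$ and the order of the spectral integrations may be exchanged, the diagonal localization yields $A_{B_1+B_2}$; and since $B_1+B_2$ is again a real skew-symmetric matrix, the right-hand side is itself a legitimate warped convolution, completing the argument.
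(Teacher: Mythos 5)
Your reduction is the right one, and it matches the strategy behind the argument this paper is quoting: note that the paper itself gives no proof of Proposition \ref{wc2} --- Section \ref{s2} explicitly defers all proofs to the original works on warped convolutions --- so the benchmark is the proof given there. Your covariance identity $\alpha_a(A_B)=(\alpha_a(A))_B=\int\alpha_{a+Bx}(A)\,dE(x)$ is correct (it uses only that $U(a)$ commutes with its own spectral resolution), and it correctly reduces the claim to the diagonal collapse of the iterated integral $\iint\alpha_{B_1x+B_2y}(A)\,dE(x)\,dE(y)$.

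The genuine gap is that the collapse itself --- which is where essentially all the content of the proposition sits --- is described but never executed, as you concede (``the main obstacle is exactly this last step''). The multiplicativity $E(\Delta_1)E(\Delta_2)=E(\Delta_1\cap\Delta_2)$ yields the collapse directly only for scalar integrands, which can be pulled past the projections; here $\alpha_{B_1x+B_2y}(A)$ does not commute with $E$. The natural elementary fix --- pass to Riemann--Stieltjes sums over a common partition, where the integrand stands wholly to the left of both projections and off-diagonal cells are killed by orthogonality --- is not available either, because these integrals do not converge in norm as Stieltjes sums; they exist only as oscillatory integrals, strongly on smooth vectors. So the proof must be run entirely inside the regularized representation: one has to show that the resulting quadruple oscillatory integral (two cutoffs $\chi$, two Fourier kernels) converges on $\mathcal{D}$, justify interchanging the two $\epsilon$-limits, and verify that the kernels really produce the claimed localization --- precisely the Rieffel-type estimates that make the original proof nontrivial. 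Your proposal lists these tasks as things one ``must'' do, which is an accurate table of contents for the missing proof rather than the proof itself. Two smaller points: for the outer warping of $A_{B_1}$ to be defined by the same formula you need $A_{B_1}$ to be again smooth under the adjoint action of $U$ (i.e.\ that warping maps $C^{\infty}$ into itself); this is tacitly assumed when you write $\int\alpha_{B_2y}(A_{B_1})\,dE(y)$. And the worry about ``the unboundedness of $A$'' is misplaced here: in this proposition $A\in C^{\infty}$ is bounded; unbounded operators enter the paper only from Section \ref{s3} onward.
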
$\,$\newline
Next, we adopt Formula (\ref{WC}) to define the warped convolutions for an unbounded
operator,  with a  real vector-valued function of the coordinate operator. To apply the definition of   warped convolutions, we need   self-adjoint operators that commute along their components. For this purpose let us give the following theorem, \cite[ Theorem VIII.6]{ RS1}.

\begin{theorem}\label{tsa}
Let $\mathbf{Q}(.)$ be an unbounded real vector-valued Borel function on  $\mathbb{R}^n $ and let the dense domain $D_{\mathbf{Q}}$ be given as, 
\begin{equation*}
D_{\mathbf{Q}}=\{\phi| \int\limits_{ -\infty}^{\infty} |Q_j(\mathbf{x})|^2\,d(\phi,P_\mathbf{x}\phi) <\infty, \, \quad j=1,\dots, n\},
\end{equation*}
where $\{P_x\}$ are projection valued measures on $\mathscr{H}$. Then, $\mathbf{Q}(\mathbf{X})$ defined on $D_{\mathbf{Q}}$ is a self-adjoint operator.

\end{theorem}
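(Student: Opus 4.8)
The plan is to recognise this as the standard self-adjointness statement for the multidimensional Borel functional calculus attached to the joint projection valued measure $\{P_\mathbf{x}\}$, and to establish self-adjointness through the criterion $\mathrm{Dom}(Q_j(\mathbf{X})^*)=D_{\mathbf{Q}}$. Since the components of $\mathbf{X}$ commute strongly, $\{P_\mathbf{x}\}$ is a genuine $\mathbb{R}^n$-valued spectral measure, so for every $\phi\in\mathscr{H}$ the quantity $\mu_\phi(\Omega):=(\phi,P(\Omega)\phi)=\|P(\Omega)\phi\|^2$ is a finite positive Borel measure on $\mathbb{R}^n$. I would define each component through the strong integral $Q_j(\mathbf{X})\phi:=\int Q_j(\mathbf{x})\,dP_\mathbf{x}\phi$; the condition $\int|Q_j(\mathbf{x})|^2\,d\mu_\phi(\mathbf{x})<\infty$ defining $D_{\mathbf{Q}}$ is precisely the requirement that this integral converge in norm, so $Q_j(\mathbf{X})$ is well defined there with $\|Q_j(\mathbf{X})\phi\|^2=\int|Q_j|^2\,d\mu_\phi$.

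First I would record density and symmetry. For the truncation sets $E_N:=\{\mathbf{x}:|Q_j(\mathbf{x})|\le N,\ j=1,\dots,n\}$ one has $P(E_N)\phi\to\phi$ for every $\phi$, while each $P(E_N)\phi\in D_{\mathbf{Q}}$ because $Q_j$ is bounded on $E_N$; hence $D_{\mathbf{Q}}$ is dense. Symmetry follows from the reality of $Q_j$: for $\phi,\psi\in D_{\mathbf{Q}}$ the self-adjointness of the spectral projections gives $d(P_\mathbf{x}\phi,\psi)=d(\phi,P_\mathbf{x}\psi)$, so $(Q_j(\mathbf{X})\phi,\psi)=\int Q_j\,d(\phi,P_\mathbf{x}\psi)=(\phi,Q_j(\mathbf{X})\psi)$, which already yields $D_{\mathbf{Q}}\subseteq\mathrm{Dom}(Q_j(\mathbf{X})^*)$.

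The main step, and the only nontrivial one, is the reverse inclusion $\mathrm{Dom}(Q_j(\mathbf{X})^*)\subseteq D_{\mathbf{Q}}$. Fix $\psi\in\mathrm{Dom}(Q_j(\mathbf{X})^*)$ and set $\eta:=Q_j(\mathbf{X})^*\psi$, so that $(Q_j(\mathbf{X})\phi,\psi)=(\phi,\eta)$ for all $\phi\in D_{\mathbf{Q}}$. Writing $Q_j^{(N)}:=Q_j\,\mathbf{1}_{E_N}$, the bounded operator $Q_j^{(N)}(\mathbf{X})=Q_j(\mathbf{X})P(E_N)$ maps $\psi$ into $D_{\mathbf{Q}}$, so $\phi_N:=Q_j^{(N)}(\mathbf{X})\psi$ is admissible in the defining relation. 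Using multiplicativity of the functional calculus, $Q_j(\mathbf{X})\phi_N=(Q_j^2\mathbf{1}_{E_N})(\mathbf{X})\psi$, whence $(Q_j(\mathbf{X})\phi_N,\psi)=\int_{E_N}|Q_j|^2\,d\mu_\psi=\|\phi_N\|^2$. Equating this with $(\phi_N,\eta)$ and applying Cauchy--Schwarz gives $\|\phi_N\|^2=(\phi_N,\eta)\le\|\phi_N\|\,\|\eta\|$, hence $\int_{E_N}|Q_j|^2\,d\mu_\psi=\|\phi_N\|^2\le\|\eta\|^2$ uniformly in $N$.

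Letting $N\to\infty$ and invoking monotone convergence (the $E_N$ increase to the set where $Q_j$ is finite, which has full $\mu_\psi$-measure) then produces $\int|Q_j|^2\,d\mu_\psi\le\|\eta\|^2<\infty$, i.e. $\psi\in D_{\mathbf{Q}}$ for the $j$-th integrability condition. Carrying this out for every $j=1,\dots,n$ places $\psi$ in the full common domain, so $Q_j(\mathbf{X})^*=Q_j(\mathbf{X})$ and each component is self-adjoint; since all components are generated by the single spectral measure $\{P_\mathbf{x}\}$ they automatically commute, which is exactly what the warped-convolution construction requires. I expect the truncation-and-limit estimate to be the crux: the delicate point is that the adjoint relation may only be tested against vectors already known to lie in $D_{\mathbf{Q}}$, which is why the bounded cutoffs $\phi_N=Q_j^{(N)}(\mathbf{X})\psi$ must be introduced before the Cauchy--Schwarz bound can be closed and the monotone limit taken.
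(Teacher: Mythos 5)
The paper itself offers no proof of this statement: it is quoted (in vector-valued form) from Reed--Simon as Theorem VIII.6, so your attempt is being compared against a bare citation. Your self-contained argument -- defining $Q_j(\mathbf{X})$ as a spectral integral, getting density from the truncations $P(E_N)$, symmetry from the reality of $Q_j$, and closing the adjoint-domain inclusion with the test vectors $\phi_N=(Q_j\mathbf{1}_{E_N})(\mathbf{X})\psi$, Cauchy--Schwarz and monotone convergence -- is exactly the standard proof of that theorem, and for a single fixed component it is correct: it shows that every $\psi\in\mathrm{Dom}\bigl(Q_j(\mathbf{X})^*\bigr)$ satisfies $\int|Q_j|^2\,d\mu_\psi\le\|\eta\|^2<\infty$.

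The gap is in your final sentence. For fixed $j$ the operator whose adjoint you compute is $Q_j(\mathbf{X})$ restricted to the \emph{common} domain $D_{\mathbf{Q}}=\bigcap_i D_i$, and your argument yields only the $j$-th integrability condition for $\psi$; ``carrying this out for every $j$'' is not available, because membership in $\mathrm{Dom}\bigl(Q_j(\mathbf{X})^*\bigr)$ provides no test vectors controlling $\int|Q_i|^2\,d\mu_\psi$ for $i\neq j$. Indeed the inclusion $\mathrm{Dom}\bigl(Q_j(\mathbf{X})^*\bigr)\subseteq D_{\mathbf{Q}}$ is false in general: for $n=2$, $Q_1=x_1$, $Q_2=x_2$ on $L^2(\mathbb{R}^2)$, the vector $\psi(x)=e^{-x_1^2/2}(1+x_2^2)^{-5/8}$ pairs correctly against all $\phi\in D_{\mathbf{Q}}$ (so it lies in the domain of the adjoint of $Q_1|_{D_{\mathbf{Q}}}$, with adjoint vector $x_1\psi\in L^2$), yet $x_2\psi\notin L^2$, so $\psi\notin D_{\mathbf{Q}}$. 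Consequently each component restricted to the common domain is only \emph{essentially} self-adjoint when $n\ge 2$, never self-adjoint. What your computation actually proves is the precise Reed--Simon statement: each $Q_j(\mathbf{X})$ is self-adjoint on its own maximal domain $D_j=\{\phi:\int|Q_j|^2\,d\mu_\phi<\infty\}$, and $D_{\mathbf{Q}}$ is a core for it. This imprecision is inherited from the theorem as stated in the paper (the common domain is what the warped-convolution formulas need), but to be correct your proof should either establish self-adjointness on $D_j$ and record that $D_{\mathbf{Q}}$ is a common core, or drop the claim that the adjoint's domain lands inside $D_{\mathbf{Q}}$.
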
 $\,$\newline
In this paper we  consider unbounded real vector-valued functions of the \textbf{coordinate operator} and therefore we give the following definition. 
\begin{definition}  
Let $B$ be a real skew-symmetric matrix on $\mathbb{R}^{n}$ and 
let $\chi \in\mathscr{S}(\mathbb{R}^n\times\mathbb{R}^n)$ with $\chi(0,0)=1$. Moreover,  let 
$\mathbf{Q}(\mathbf{X})$ be given as in Theorem \ref{tsa}. Then, the warped convolutions of an   operator $A$ with  operator $\mathbf{Q}$, denoted as 
$A_{B,\mathbf{Q} }$  are
defined, in the same manner as in \cite{BLS}, namely
\begin{equation}\label{defcop}
A_{B,\mathbf{Q}  } := (2\pi)^{-n}
\lim_{\epsilon\rightarrow 0}
  \iint  \, d^{n}y \,  d^{n}k \, e^{-iy_{l}k^{l}}  \, \chi(\epsilon y,\epsilon k)V(k)
\alpha_{B y}(A) . 
\end{equation}
The automorphisms $\alpha$ are implemented by the adjoint action of the strongly
continuous unitary representation $V(y)=e^{iy_kQ^k}$ of $\mathbb{R}^n$ given by  
\begin{equation*}
 \alpha_{y}(A)=V(y)\,A\,V(y)^{-1}, \quad y \in \mathbb{R}^n.
\end{equation*} 
\end{definition}$\,$\newline
Since we deform unbounded operators we are obliged to prove  that the deformation
formula, given as an oscillatory integral, is well-defined.  This is the subject of the next section.

\section{Deforming Unbounded Operators}\label{s3}
At first, we study deformations of the
simplest Hamiltonian of quantum mechanics, that of a free particle.  Further on we explore the
physical consequences of the deformation and introduce to the reader how one can obtain a variety
of physical effects using this method. For a deformation of the Hamiltonian we choose to work in the
standard realization of quantum mechanics, the so called \textbf{Schr\"odinger
representation}, \cite{BEH, RS1, T}. In this representation the pair of operators
$(P_{i},X_{j})$, satisfying the \textbf{canonical commutation relations}  (CCR)
\begin{equation}\label{ccr}
[X_i,P_{k}]=-i\eta_{ik},
\end{equation}
are represented as essentially self-adjoint operators on the dense domain
$\mathscr{S}(\mathbb{R}^n)$. Here $X_{i}$ and $P_{k}$ are the closures of
$x_{i}$ and multiplication by $i {\partial}/{\partial x^k}$ on
$\mathscr{S}(\mathbb{R}^n)$ respectively. 
\newline\newline
In quantum mechanics  the Hamiltonian of a free
particle
is given as follows
\begin{equation}\label{fh}
H_{0}= -\frac{P_jP^j}{2m}.
\end{equation}
This operator describes a non-relativistic and non-interacting particle. For the following considerations, we restrict the
deformation to three space dimensions. This restriction is obvious due to its physical relevance.
Let us
start this section with a  theorem concerning the domain of self-adjointness
and the spectrum of  the free undeformed Hamiltonian $H_{0}$, \cite{T}.
\newline  
\begin{theorem}
 The free Schr\"odinger operator $H_{0}$ is self-adjoint on the domain
$\mathcal{D}(H_{0})$ given as 
\begin{equation*}
\mathcal{D}(H_{0})=H^2(\mathbb{R}^3)=\{\varphi \in
L^2(\mathbb{R}^3)||\textbf{P}|^2\varphi\in
L^2(\mathbb{R}^3)\},
\end{equation*}
and its spectrum is characterized by $\sigma(H_{0})=[0,\infty)$.
\end{theorem}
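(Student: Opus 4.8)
The plan is to diagonalise $H_0$ by the Fourier transform and then read off both assertions from the elementary spectral theory of multiplication operators. First I would introduce the Fourier transform $\mathcal{F}$, a unitary operator on $L^2(\mathbb{R}^3)$ which maps $\mathscr{S}(\mathbb{R}^3)$ onto itself and carries each momentum component $P_j$ to multiplication by the coordinate $p_j$. Under $\mathcal{F}$ the operator $H_0$ is therefore unitarily equivalent to the operator $M_h$ of multiplication by the real-valued, continuous, almost-everywhere finite function $h(\mathbf{p}) = |\mathbf{p}|^2/(2m)$. In this representation $M_h$ is a function of the (momentum-space) coordinate operator, so Theorem \ref{tsa}, applied with the scalar Borel function $h$, immediately gives that $M_h$ is self-adjoint on $D_{h} = \{\hat\varphi \in L^2(\mathbb{R}^3) : h\,\hat\varphi \in L^2(\mathbb{R}^3)\}$; by unitarity $H_0$ is then self-adjoint on $\mathcal{F}^{-1} D_{h}$.

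To make this domain concrete I would transport the condition $h\,\hat\varphi \in L^2$ back through $\mathcal{F}$. Since multiplication by $h(\mathbf{p})$ corresponds to the action of $|\mathbf{P}|^2/(2m)$, the requirement $h\,\hat\varphi \in L^2$ is precisely $|\mathbf{P}|^2\varphi \in L^2(\mathbb{R}^3)$, which by definition is membership in the Sobolev space $H^2(\mathbb{R}^3)$. This identifies $\mathcal{D}(H_0) = H^2(\mathbb{R}^3)$, as claimed.

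For the spectrum I would invoke the standard fact that the spectrum of a multiplication operator coincides with the essential range of its multiplier, that is, $\lambda \in \sigma(M_h)$ if and only if the set $\{\mathbf{p} : |h(\mathbf{p}) - \lambda| < \varepsilon\}$ has positive Lebesgue measure for every $\varepsilon > 0$. Because $h(\mathbf{p}) = |\mathbf{p}|^2/(2m)$ is continuous, nonnegative, and attains every value of $[0,\infty)$ on sets of positive measure, its essential range is exactly $[0,\infty)$. Unitary equivalence preserves the spectrum, whence $\sigma(H_0) = \sigma(M_h) = [0,\infty)$.

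The calculation is in the end routine; the points that deserve a little care are the verification that the closure of $P_j$ on $\mathscr{S}(\mathbb{R}^3)$ coincides with the maximal multiplication domain in the Fourier picture, so that no spurious domain mismatch arises, and the observation that it is the essential range rather than the range of $h$ that governs the spectrum. In particular, $H_0$ has no eigenvalues, since each level set $\{h = \lambda\}$ is a sphere of Lebesgue measure zero and therefore supports no nonzero $L^2$ eigenfunction, so the spectrum is purely absolutely continuous. Both points follow from the unitarity of $\mathcal{F}$ together with the spectral theorem already used through Theorem \ref{tsa}.
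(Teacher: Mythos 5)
Your proposal is correct and matches the approach of the source the paper relies on: the paper gives no proof of this theorem at all, citing it as a standard result from \cite{T}, and the standard proof there is precisely your argument --- Fourier-diagonalize $H_0$ to the multiplication operator by $h(\mathbf{p})=|\mathbf{p}|^2/(2m)$, obtain self-adjointness on the maximal domain (which pulls back to $H^2(\mathbb{R}^3)$ exactly as the paper defines it), and identify the spectrum with the essential range $[0,\infty)$. One wording slip worth fixing: $h$ does not attain any value ``on sets of positive measure'' (each level set is a sphere of Lebesgue measure zero, as you yourself note when excluding eigenvalues); the correct justification, which you do state, is that $\{\mathbf{p} : |h(\mathbf{p})-\lambda|<\varepsilon\}$ has positive measure for every $\varepsilon>0$ and every $\lambda\geq 0$.
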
$\,$\newline
Before proceeding with the deformation, one problem arises at this point of our work.
The deformation formula  given by warped convolutions is only well-defined in the strong operator
topology for a subset of \textbf{bounded operators} that are smooth w.r.t. the unitary representation
$U$ of
$\mathbb{R}^n$. In view of the fact that we deal with unbounded operators, we are obliged to investigate the validity of the deformation Formula (\ref{defcop})
for $H_{0}$.  For this purpose we need a dense domain $ \mathcal{E}\subseteq \mathscr{S}(\mathbb{R}^3)$  that fulfills additional requirements. 
\begin{lemma}\label{cnt0}
Consider the self-adjoint operator 

\begin{equation}\label{hs0}
 \mathbf{Q}(\mathbf{X}) =\mathbf{X}/ \vert\mathbf{X}\vert^{ n}, \qquad    n\in\mathbb{R}.
\end{equation}
 Then,  for all $n\in\mathbb{R}$  there exists a dense domain $ \mathcal{E}\subseteq \mathscr{S}(\mathbb{R}^3)$ such that
\begin{align} \label{hs}
\Vert 
\{P_{j},[\mathbf{Q},P^j]\}  \Phi \Vert<\infty, \qquad 
\Vert 
[\mathbf{Q},P_j][\mathbf{Q},P^j]  \Phi \Vert<\infty, \qquad \Phi\in  \mathcal{E} .
\end{align}
\end{lemma}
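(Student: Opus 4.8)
The plan is to reduce both expressions in (\ref{hs}) to explicit multiplication (and first-order differential) operators, and then to choose $\mathcal{E}$ so as to excise the single problematic point, the origin. In the Schr\"odinger representation $\mathbf{Q}(\mathbf{X})$ acts by multiplication by $\mathbf{x}/|\mathbf{x}|^{n}$ while $P_j$ acts as $i\partial/\partial x^j$, so on smooth vectors the commutator collapses to a pure multiplication operator. The first computation I would carry out is therefore
\begin{equation*}
[Q^k,P_j]=-i\,\partial_j\!\left(\frac{x^k}{|\mathbf{x}|^{n}}\right)=-i\left(\delta^k_j\,|\mathbf{x}|^{-n}-n\,x^k x_j\,|\mathbf{x}|^{-n-2}\right),
\end{equation*}
a function that is smooth on $\mathbb{R}^3\setminus\{0\}$ and homogeneous of degree $-n$. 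The point of doing this first is that it exhibits the absence of any genuinely unbounded differential part inside the commutators themselves; only the coordinate singularity at $\mathbf{x}=0$ and the behaviour at infinity remain.

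From this I would read off the analytic structure of the two quantities. The term $\{P_j,[\mathbf{Q},P^j]\}$ is a first-order differential operator whose coefficients are the function above and its derivative, homogeneous of degrees $-n$ and $-n-1$ respectively, while $[\mathbf{Q},P_j][\mathbf{Q},P^j]$ is pure multiplication by a function homogeneous of degree $-2n$. Hence the only conceivable obstructions to a finite $L^2$-norm are the singularity at the origin (present when $n>0$) and the polynomial growth at infinity (present when $n<0$); the precise index contraction of the free $\mathbf{Q}$-indices is immaterial to this estimate.

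The key step is then the choice $\mathcal{E}:=C_c^{\infty}(\mathbb{R}^3\setminus\{0\})$, the smooth functions of compact support bounded away from the origin. Clearly $\mathcal{E}\subseteq\mathscr{S}(\mathbb{R}^3)$. For any $\Phi\in\mathcal{E}$ the support is a compact set contained in some shell $\{0<a\le|\mathbf{x}|\le b\}$, on which all the homogeneous coefficient functions above are smooth and uniformly bounded; since $\Phi$ together with all its derivatives is smooth with compact support, both expressions in (\ref{hs}) are genuine $L^2$-functions with finite norm, and this holds simultaneously for every $n\in\mathbb{R}$.

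The only point requiring genuine (if standard) care is density: one must check that excising the origin does not destroy $L^2$-density. I would argue this from the fact that $\{0\}$ has zero Lebesgue measure, equivalently zero capacity in dimension three, so that $C_c^{\infty}(\mathbb{R}^3\setminus\{0\})$ is $L^2$-dense in $C_c^{\infty}(\mathbb{R}^3)$ and hence in $\mathscr{H}=L^2(\mathbb{R}^3)$. Concretely one multiplies a given test function by a radial cutoff $\zeta_\delta$ vanishing on $|\mathbf{x}|<\delta$ and equal to $1$ on $|\mathbf{x}|>2\delta$, and lets $\delta\to 0$, controlling the error by the vanishing measure of the excised ball. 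This is the step I expect to be the main (though modest) obstacle; the remaining verifications are exactly the routine boundedness-on-compact-support estimates of the preceding paragraph.
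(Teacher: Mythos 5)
Your proposal is correct, but it takes a genuinely different route from the paper. The paper starts the same way you do, computing the commutators explicitly as multiplication operators (its Equations (\ref{hs1})--(\ref{hs2})), but then splits into cases: for $n\le 0$ it takes $\mathcal{E}=\mathscr{S}(\mathbb{R}^3)$, since the coefficient functions grow at most polynomially, while for $n>0$ it takes $\mathcal{E}$ to be the linear span of the Hermite-type vectors $x_1^{k_1}x_2^{k_2}x_3^{k_3}e^{-|\mathbf{x}|^2/2}$ (dense by a theorem cited from Thirring) and reduces the estimates to the radial integrals $\int_0^\infty r^{2\lambda}e^{-r^2}\,dr$, whose existence ``for all $\lambda$'' is asserted via analyticity in $\lambda$ with a reference to Gelfand--Shilov. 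You instead choose the single domain $C_c^\infty(\mathbb{R}^3\setminus\{0\})$ uniformly for every $n$, on which all coefficient functions are bounded over the supporting shell, and you prove density by a cutoff argument. Your route is more elementary, handles all $n$ at once, and in fact avoids a delicate point in the paper's own argument: for the pure Gaussian (all $k_i=0$) the radial integral diverges as a Lebesgue integral as soon as $2\lambda\le -1$, so the paper's ``existence for all $\lambda$'' holds only in the regularized, analytically continued sense of Gelfand--Shilov, whereas excising the origin yields genuine finiteness for every $n\in\mathbb{R}$. What the paper's choice buys in exchange is a domain containing the Gaussian and the other oscillator-type eigenfunctions, i.e.\ vectors that need not vanish near the origin and are closer to physically natural states; your domain necessarily excludes every wave function that is nonzero at the origin. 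One small wording correction: for $L^2$-density of $C_c^\infty(\mathbb{R}^3\setminus\{0\})$ the zero Lebesgue measure of a point is all that is needed (capacity is the relevant notion for density in $H^1$, and measure zero and capacity zero are not equivalent notions in general); your cutoff argument as written is the right one.
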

\begin{proof}
From Theorem \ref{tsa} it is follows that all operators of the form $\mathbf{X}/ \vert\mathbf{X}\vert^{ n}$ are self-adjoint on their respective domains.  Further we show the existence of a dense domain, satisfying Inequalities (\ref{hs}).  To simplify calculations let us give  general formulas for the commutators 
\begin{align}\label{hs1}
[P_j,  \vert\mathbf{X}\vert^{-n}]= i \,n\, 
 X_j\vert\mathbf{X}\vert^{ -(n+2)} ,
\end{align}
\begin{align}\label{hs2}
[P_j,   X_k /\vert\mathbf{X}\vert^{n}]= i\left(\eta_{jk}+
 n\,X_kX_j/\vert\mathbf{X}\vert^{ 2}\right)\vert\mathbf{X}\vert^{-n}.
\end{align} 
Thus, for an arbitrary  $n\in\mathbb{R}$  and $\mathbf{Q} =\mathbf{X}/ \vert\mathbf{X}\vert^{ n}$  the anti-commutator in Inequality (\ref{hs}) is calculated as follows
\begin{align*} 
\{P_{j},[P^j, {Q}^k ]\} 
&=[P_j,[P^j, {Q}^k ]]+2[P^j, {Q}^k ]P_{j}\\&=
i[P_j,\left(\eta^{jk}+
 n X^kX^j/\vert\mathbf{X}\vert^{ 2}\right)\vert\mathbf{X}\vert^{-n} ]+2i\left(\eta^{jk}+
 n X^kX^j/\vert\mathbf{X}\vert^{ 2}\right)\vert\mathbf{X}\vert^{-n}P_{j}\\&
=  \underbrace{ \left(n^2-3n\right) }_{=:a(n)}
 X^k \vert\mathbf{X}\vert^{ -(n+2)}  +
2i\left(\eta^{jk}+
 n\,X^kX^j/\vert\mathbf{X}\vert^{ 2}\right)\vert\mathbf{X}\vert^{-n}P_{j},
\end{align*}where in  the last lines we used the CCR, Equations (\ref{hs1}) and (\ref{hs2}).
The norm of the anti-commutator is given by
\begin{align*}
 \Vert 
\{P_{j},[\mathbf{Q},P^j]\} \Phi \Vert &=  \Vert e_k\left(a(n)\,
 X^k \vert\mathbf{X}\vert^{ -(n+2)} +
2i\left(\eta^{jk}+
 n\,X^kX^j/\vert\mathbf{X}\vert^{ 2}\right)\vert\mathbf{X}\vert^{-n}P_{j} \right)\Phi \Vert \\&\leq
\Vert e_k\,a(n)\,
 X^k \vert\mathbf{X}\vert^{ -(n+2)} \Phi \Vert+ \Vert
2  e_k\left(\eta^{jk}+  
 n\,X^kX^j/\vert\mathbf{X}\vert^{ 2}\right)\vert\mathbf{X}\vert^{-n}P_{j} \Phi \Vert  \\&\leq 
\Vert a(n)
 \vert\mathbf{X}\vert^{ -(n+1)} \Phi \Vert+\Vert
2   \,\vert\mathbf{X}\vert^{-n} \mathbf{P} \Phi \Vert + \Vert 
 2 n\, \vert\mathbf{X}\vert^{-(n+1) }\,X^j  P_{j}\Phi \Vert.
\end{align*}
 The term in the second inequality in (\ref{hs})  is given by
\begin{align*}
\Vert 
[\mathbf{Q},P_j][\mathbf{Q},P^j]  \Phi \Vert&=\Vert 
\left(\eta_{jl}+
 n\,X_lX_j/\vert\mathbf{X}\vert^{ 2}\right)
 \left(\eta^{jl}+
 n\,X^lX^j/\vert\mathbf{X}\vert^{ 2}\right) \vert\mathbf{X}\vert^{-2n} \Phi \Vert\\&=
\Vert  (n^2-2n+3) \vert\mathbf{X}\vert^{-2n}
\Phi \Vert.
\end{align*}
It is clear that if  $n\in \mathbb{R}^{-}_0$ Inequalities (\ref{hs}) are satisfied for vectors in the dense domain $\mathscr{S}(\mathbb{R}^3)$, since the expressions in the norm are positive polynomial functions of the coordinate operator. 
For $n\in \mathbb{R}^{+}$ we consider  the domain $\mathcal{E}$  which denotes the linear hull of the \textit{dense} vectors \cite[Theorem 3.2.5]{Th}
\begin{align*}
 \Phi(\mathbf{x})=x_1^{k_1}x_2^{k_2}x_3^{k_3}\,\exp{(-\frac{|\mathbf{x}|^2}{2} )},\qquad k_i=0,\,1,\,2,\,\dots.
\end{align*} 
Since the dense domain $\mathcal{E}$ remains invariant under the action of positive functions of the coordinate and momentum operator (see proof of \cite[Theorem 3.2.5]{Th}), the remaining task is to show  the finiteness of 
\begin{align*} 
\Vert 
 \vert\mathbf{X}\vert^{  (\lambda-1)} \Phi \Vert ^2&=\int d^3 \mathbf{x}\,|\mathbf{x}|^{2(\lambda-1)}\,e^{-|\mathbf{x}|^2}, \qquad \lambda\in\mathbb{R}^{-}  \\&
=\int\limits_0^{2\pi}d\phi\int\limits_0^{\pi}d\vartheta \sin\vartheta\int\limits_0^{\infty}r^{2\lambda}\,e^{-r^2}dr\\&= 2\pi  \int\limits_{-\infty}^{\infty}x^{2\lambda}\,e^{-x^2}dx.
\end{align*} This integral exists for all $\lambda$ and it is easily seen to be an analytic function in $\lambda$, 
\cite[Chapter 1, Section 3.6]{GS1}.
Note that we choose the polynomial functions of components of  $\mathbf{x}$ to be equal to one. This choice is for the sake of argument, since  positive polynomial functions improve the behavior of the integral. 
\end{proof}$\,$\newline
 By using the former lemma,  we show in the next proposition that the scalar product of the  deformed free Hamiltonian, i.e. 
\begin{align*}
\langle \Psi, (H_{0})_{B,\mathbf{Q}}\Phi\rangle&=
(2\pi)^{-3} {
\lim_{\varepsilon\rightarrow 0}
  \iint  \, d^{3}y \,  d^{3}k \, e^{-iy_{l}k^{l}}  \, \chi(\varepsilon y,\varepsilon
k)\langle \Psi, 
V(k)\alpha_{By}(H_0)\Phi\rangle},
\end{align*}
is \textbf{bounded} for   $ \forall  \Psi\in \mathscr{H}$ and $\Phi \in \mathcal{E}\subseteq \mathscr{S}(\mathbb{R}^3) $. \newline

\begin{proposition}\label{wcfh}
Let 
$\mathbf{Q}(\mathbf{X})$ be a self-adjoint  operator of the form
\begin{align*}
 \mathbf{Q}(\mathbf{X}) =\mathbf{X}/ \vert\mathbf{X}\vert^{ n}, \qquad    n\in\mathbb{R},
\end{align*}
and let $(H_{0})_{B,\mathbf{Q}}$ denote the deformed free Hamiltonian (see Formula (\ref{defcop})). Then, the scalar product $\langle \Psi, (H_{0})_{B,\mathbf{Q}}\Phi\rangle$ is bounded by a finite constant $C_B$ as follows,
\begin{align*}\vert
\langle \Psi, (H_{0})_{B,\mathbf{Q}}\Phi\rangle\vert&\leq
C_B\,\Vert \Psi\Vert, \qquad  \forall  \Psi\in \mathscr{H},\,\, \Phi \in \mathcal{E}\subseteq \mathscr{S}(\mathbb{R}^3).
\end{align*}
Therefore, the deformation formula  for the unbounded operator
$H_{0}$, given as an oscillatory
integral, is   well-defined and the explicit result of the deformation is  
\begin{equation}\label{defh}
(H_{0})_{B,\mathbf{Q}}\Phi=
-\frac{1}{2m}\left
(P_j+i(B  Q )^k [Q_k,P_j]  \right)\left(P^j+i(B  Q )^r [Q_r,P^j]  \right)
\Phi.
\end{equation}
 
\end{proposition}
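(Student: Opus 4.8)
The plan is to evaluate the adjoint action $\alpha_{By}(H_0)$ explicitly, reduce the oscillatory integral (\ref{defcop}) to an ordinary operator expression by exploiting the interplay of the phase $e^{-iy_lk^l}$ with the representation $V(k)=e^{ik_lQ^l}$, and finally to estimate the result with the help of Lemma \ref{cnt0}. Since $\alpha_{By}$ is a $*$-automorphism it respects the product $H_0=-P_jP^j/2m$, so I would first compute $\alpha_{By}(P_j)=V(By)\,P_j\,V(By)^{-1}$ through the conjugation (Baker--Campbell--Hausdorff) series. The crucial observation is that $[Q^k,P_j]$ is a function of the coordinate operator alone, as is visible from the explicit commutators (\ref{hs1}) and (\ref{hs2}) for $\mathbf{Q}=\mathbf{X}/|\mathbf{X}|^n$; hence it commutes with every component $Q^m$, the double commutator $[Q^m,[Q^k,P_j]]$ vanishes, and the series terminates at first order, giving $\alpha_{By}(P_j)=P_j+i(By)_k[Q^k,P_j]$. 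Inserting this into $\alpha_{By}(H_0)=-\tfrac{1}{2m}\alpha_{By}(P_j)\alpha_{By}(P^j)$ produces a polynomial in $(By)$ of degree two whose coefficients are the operators $P_jP^j$, the cross terms of the form $[Q^k,P_j]P^j$ and $P_j[Q^r,P^j]$, and the quadratic term $[Q^k,P_j][Q^r,P^j]$.

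Next I would carry out the $k$- and $y$-integrations. The terms that are linear and quadratic in $(By)$ grow too fast for the $y$-integration to converge naively; the standard remedy for oscillatory integrals is integration by parts in $k$, using $y^l e^{-iy_mk^m}=i\,\partial_{k_l}e^{-iy_mk^m}$. Each such integration by parts transfers the derivative either onto $V(k)$, where $\partial_{k_l}V(k)$ yields the operator $Q^l$ (which commutes with $V(k)$ and with the coordinate functions $[Q,P]$), or onto the cutoff $\chi(\epsilon y,\epsilon k)$, where it produces a factor of $\epsilon$ that vanishes in the limit $\epsilon\to 0$, while $\chi(\epsilon y,\epsilon k)\to\chi(0,0)=1$. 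The surviving $k$-integral acts, in the spectral sense, as the spectral measure of the generators $\mathbf{Q}$, against which the subsequent $y$-integration substitutes $y\mapsto\mathbf{Q}$ in $\alpha_{By}(H_0)$, exactly as in the defining Formula (\ref{WC}) adapted to the unbounded case. This effects the replacement $(By)\to(B\mathbf{Q})$; since $(B\mathbf{Q})^k$ and the commutators $[Q_k,P_j]$ are all functions of the coordinate operator, no ordering ambiguity arises within the individual factors, and one recovers precisely Formula (\ref{defh}).

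It then remains to verify that the limit exists and that the resulting operator is bounded in the claimed weak sense. After the substitution the vector $(H_0)_{B,\mathbf{Q}}\Phi$ equals $-\tfrac{1}{2m}\bigl(P_j+i(B\mathbf{Q})^k[Q_k,P_j]\bigr)\bigl(P^j+i(B\mathbf{Q})^r[Q_r,P^j]\bigr)\Phi$, whose norm is controlled, term by term, precisely by the two quantities $\Vert\{P_j,[\mathbf{Q},P^j]\}\Phi\Vert$ and $\Vert[\mathbf{Q},P_j][\mathbf{Q},P^j]\Phi\Vert$ that were shown to be finite on $\mathcal{E}$ in Lemma \ref{cnt0}, the constant entries of the matrix $B$ being absorbed into the bounding constant. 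An application of the Cauchy--Schwarz inequality to $\langle\Psi,(H_0)_{B,\mathbf{Q}}\Phi\rangle$ then yields $|\langle\Psi,(H_0)_{B,\mathbf{Q}}\Phi\rangle|\le C_B\Vert\Psi\Vert$ for all $\Psi\in\mathscr{H}$.

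The main obstacle is the rigorous handling of the oscillatory integral over \emph{unbounded} operators: one must justify the interchange of the integration-by-parts procedure with the weak matrix element, confirm that the $\chi$-derivative contributions genuinely vanish as $\epsilon\to 0$, and check that the boundary terms generated by the integrations by parts do not contribute (which follows from the rapid decay of $\chi(\epsilon\,\cdot\,,\epsilon\,\cdot\,)$ for fixed $\epsilon>0$). Lemma \ref{cnt0} is exactly the input that makes these manipulations admissible, since it guarantees that every operator coefficient appearing after the substitution maps the dense domain $\mathcal{E}$ into $\mathscr{H}$, so that all intermediate expressions are genuine Hilbert-space vectors and the weak limit is finite.
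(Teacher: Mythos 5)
Your proposal is essentially correct and rests on the same three pillars as the paper's proof: the terminating Baker--Campbell--Hausdorff expansion of $\alpha_{By}(P_j)$ (using $[Q_l,[Q_k,P_j]]=0$), the finiteness statements of Lemma \ref{cnt0} on the domain $\mathcal{E}$, and Cauchy--Schwarz. Where you genuinely diverge is in the handling of the oscillatory integral and in the order of the argument. You derive the explicit formula \emph{first}, by integrating by parts in $k$ (writing $y^l e^{-iy_mk^m}=i\,\partial_{k_l}e^{-iy_mk^m}$, letting the derivatives hit $V(k)$ to produce factors of $\mathbf{Q}$, and discarding the $\epsilon$-suppressed cutoff terms), and only afterwards estimate the resulting vector. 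The paper does the reverse: it bounds the matrix element of the \emph{integrand}, $\vert\langle\Psi,V(k)\alpha_{By}(H_0)\Phi\rangle\vert\leq C_B\Vert\Psi\Vert(1+\vert\mathbf{y}\vert)^2$, while the coefficients $(By)^s$ are still scalars, then shows the oscillatory integral of this polynomially bounded expression converges (factorized cutoff, $k$-integration producing $\delta(\mathbf{y})$), and finally reads off Equation (\ref{defh}) by passing to the spectral-measure representation $\int dE(y)\,\alpha_{By}(H_0)\Phi$. Your integration-by-parts derivation is in fact a more explicit justification of that last switch of representations, which the paper invokes by fiat; in that respect your route makes a step rigorous that the paper only asserts.

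There is, however, one step in your estimate that does not hold as written. You claim the norm of $-\tfrac{1}{2m}\bigl(P_j+i(B\mathbf{Q})^k[Q_k,P_j]\bigr)\bigl(P^j+i(B\mathbf{Q})^r[Q_r,P^j]\bigr)\Phi$ is controlled ``precisely'' by $\Vert\{P_j,[\mathbf{Q},P^j]\}\Phi\Vert$ and $\Vert[\mathbf{Q},P_j][\mathbf{Q},P^j]\Phi\Vert$, with the entries of $B$ absorbed into the constant. But after your substitution the coefficients are no longer constants: $(B\mathbf{Q})^k=B^{kl}Q_l$ is an \emph{unbounded} function of the coordinate operator (of order $\vert\mathbf{X}\vert^{1-n}$), so the terms you must control are of the form $\Vert(B\mathbf{Q})^k\{P_j,[Q_k,P^j]\}\Phi\Vert$ and $\Vert(B\mathbf{Q})^k(B\mathbf{Q})^r[Q_k,P_j][Q_r,P^j]\Phi\Vert$, which carry extra powers of $\vert\mathbf{X}\vert$ relative to the quantities in Lemma \ref{cnt0}. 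These are still finite on $\mathcal{E}$, because the Gaussian-integral computation in the proof of that lemma works for \emph{arbitrary} real powers $\vert\mathbf{X}\vert^{\lambda}$ and $\mathcal{E}$ is invariant under such factors; but you are then using an extension of the lemma's proof, not its literal statement, and this needs to be said. This is exactly the pitfall the paper's ordering of steps avoids: by estimating before the substitution $y\mapsto\mathbf{Q}$, it only ever has to absorb polynomial growth in the scalar $\vert\mathbf{y}\vert$, which the oscillatory integral handles.
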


\begin{proof}
To prove the boundedness of the  scalar product  ${\langle \Psi, 
V(k)\alpha_{By}(H_0)\Phi\rangle}$, we first derive the adjoint action of $V(B y )$ on $H_{0}$  given
by,\begin{align*}
\alpha_{By}(H_0)&=-\frac{1}{2m}V(By)P_jP^jV(-By)\\&=-\frac{1}{2m}V
(By)P_jV(-By)V(By)P^jV(-By).
\end{align*}
To solve this expression we first   calculate the adjoint action of $V(By)$ on the
momentum operator $P_j$ by using  the Baker-Campbell-Hausdorff formula,  
\begin{align}\label{aaxp} 
V(By)P_jV(-By)& 
 =P_j+i(By)^k\underbrace{[Q_k,P_j]
}_{=:-iX_{kj} (\mathbf{X})}
 + 
\underbrace{\frac{i^2}{2}(By)^l (By)^k [Q_l,[Q_k,P_j]
]+...}_{=0},
\end{align}
 where in the last lines we used the CCR given in (\ref{ccr}),  and the commutativity of the coordinate  operator, i.e. $[X_{i},X_{j}]=0$.
Thus, the adjoint action w.r.t. $V (B y )$ on $H_{0}$ is
\begin{align*}\label{roids2}
\alpha_{By}(H_0)&=
-\frac{1}{2m}  ( 
P_{j}+(By)^s  X_{sj}     )  (
P^{j}+(By)_r X^{rj}  )
 \nonumber \\&=
H_{0} -(By)^s\underbrace{\frac{1}{2m}\left(
P^{j} X_{sj} + 
 X_{sj}P^{j}\right)}_{=:N_s} -(By)_r(By)^s \underbrace{ \frac{1}{2m}X_{sj} X^{rj}}_{=:R_s^{\,\,r}}
 \nonumber \\&= H_{0}-(By)^s N_s-(By)_r(By)^s R_s^{\,\,r}.
\end{align*}
 Moreover, without loss of generality, one can choose the  skew-symmetric matrix $B$  to have the  
form  $B_{ij}= \varepsilon_{ijk}B^j$, where $\varepsilon_{ijk}$ is the three dimensional
epsilon-tensor. Then, we are able to derive the following inequality,
\begin{equation}\label{iny}
  \vert {(By)_i\mathbf{e}^i}\vert\leq \sqrt{2} \vert \mathbf{B}\vert  \vert \mathbf{y}\vert.
\end{equation}
This is easily seen by using Cauchy-Schwarz and the inequality $\vert a\vert -\vert
b\vert\leq\vert a\vert+\vert b\vert$.
\begin{align*}
 \vert{(By)_i\mathbf{e}^i}\vert^2 &=(-B_{ij}y^jB^{is}y_s)\\
&=-\varepsilon_{ijk} B^k y^j  \varepsilon^{isr}B_r y_s\\
&= - \left( \delta_{j}^s\delta_{k}^r-\delta_{j}^r\delta_{k}^s\right)B_r B^k y^j   y_s\\
&=  \left( B_r y^r B_k y^k  -B_r B^r y^j   y_j\right)\\
& 
\leq 2 \vert \mathbf{B}\vert^2  \vert \mathbf{y}\vert^2 
\end{align*}
Thus, by using the adjoint action of $V(By)$ on the free Hamiltonian and for  $B_{ij}= \varepsilon_{ijk}B^k$   we have the following
inequality
\begin{align} \label{inecl4}
\vert {\langle \Psi, 
V(k)\alpha_{By}(H_0)\Phi\rangle}\vert& \nonumber 
\leq {\Vert \Psi\Vert} 
\left\Vert 
\left(H_{0}-(By)^s N_s-(By)_r(By)^s R_s^{\,\,r}\right)\Phi\right\Vert\\& 
\nonumber  
\leq \Vert \Psi\Vert \biggl(\underbrace{
\left\Vert  H_{0} \Phi \right\Vert}_{=:C_{2}}
+  2 \vert  \mathbf{y}
 \vert \underbrace{\frac{
 {\vert\mathbf{B}\vert}}{\sqrt{2}} 
\left\Vert  \mathbf{N} \Phi \right\Vert}_{=:C_{3}} + 
\vert  \mathbf{y}
 \vert ^2 \underbrace{ {2\vert\mathbf{B}\vert^2}   
\left\Vert\mathbf{R}\Phi \right\Vert}_{=:C_{4}}
\biggr)\nonumber \\
&\leq  {C_{B}\Vert \Psi\Vert } \left(1+\vert  \mathbf{y}
 \vert \right)^2.
\end{align}
A finite constant $C_{B}$ obeying the inequality  exists, since $ C_{2}$, $C_{3}$ and $C_{4}$ are finite for $\Phi \in \mathcal{E}$, where $\mathcal{E}$ is a dense set of vectors specified in Lemma \ref{cnt0}.  Therefore, the scalar product  is polynomially bounded to the second order in $y$, i.e. 

\begin{align*}\frac{
\vert\langle \Psi, (H_{0})_{B,\mathbf{Q}}\Phi\rangle\vert}{ C_{B}
{\Vert  \Psi\Vert}}&\leq
(2\pi)^{-3}   
\lim_{\varepsilon\rightarrow 0}
  \iint  \, d^{3}y \,  d^{3}k \, e^{-iy_{l}k^{l}}  \, \chi(\varepsilon y,\varepsilon
k) \left(1+\vert  \mathbf{y}
 \vert \right)^2 \\&
 =(2\pi)^{-3}
\lim_{\varepsilon_1\rightarrow 0}  \left(
\int d^3y \lim_{\varepsilon_2\rightarrow 0} 
\left(\int d^3k  e^{-ik_{r}y^r}
\chi_2(\varepsilon_2 k)\right)\,\chi_1(\varepsilon_1  y)\,
 \left(1+\vert  \mathbf{y}
 \vert \right)^2  \right)
\\
&=
\lim_{\varepsilon_1\rightarrow 0}  \left(
\int d^3y \,
\delta(\mathbf{y} )\,\chi(\varepsilon_1  y)\,
 \left(1+\vert  \mathbf{y}
 \vert \right)^2 \right) 
\\&= 1.
\end{align*}

Here we used the fact that the
oscillatory integral does not depend on the cut-off function  chosen. As in
\cite{RI}, we chose $\chi(\varepsilon k,\varepsilon y)= \chi_2(\varepsilon_2 k
)\chi_1(\varepsilon_1 y)$ 
with $\chi_{l}\in \mathscr{S}(\mathbb{R}^3\times\mathbb{R}^3)$ and $\chi_{l}(0,0)=1$, $l=1$, $2$,
and obtained the delta
distribution $\delta(\mathbf{y}-\mathbf{q})$ in the limit $\varepsilon_2 \rightarrow
0$, \cite[Section 7.8, Equation 7.8.5]{H}. Since the former inequality proves the convergence of the  oscillatory integral, we conclude that the deformation of the unbounded operator is well-defined.
\newline\newline Next, we turn to the actual result of the deformation. To simplify calculations it is easier to work in the spectral measure representation  (see Equation \ref{WC}). This can be done, since the two representations, one in terms of the spectral measure and the other as the limit of oscillatory integrals, are equal and we have proven that the deformation  is well-defined.  
\begin{align*}
(H_{0})_{B,\mathbf{Q}}\Phi
&=\int \, dE(y) \,\alpha_{By} \left(
H_{0}
\right)\Phi \nonumber \\&
=-\frac{1}{2m}\int \, dE(y) \, \left(   \left( 
P_{j}+i(By)_s [Q^s,P_j]    \right )   \left(
P^{j}+i(By)^r [Q_r,P^j]   \right)\right) \Phi\nonumber \\&
= -\frac{1}{2m}   \left( 
P_{j}+i(BQ  )_s [Q^s,P_j]      \right)   \left(
P^{j}+i(BQ  )^r  [Q_r,P^j]  \right)\Phi.
\end{align*}

\end{proof}$\,$\newline
The essential point of the proposition is that the deformation with the coordinate operator amounts to a
non-constant shift in the momentum. In physics this is usually referred to as \textbf{minimal
substitution}. Such a minimal substitution is in QM based on Galilei invariance and then
implemented accordingly by an external electromagnetic field (see \cite{LB}).  In our
approach we
obtain such a substitution by deformation. The connection between deformation and an
external electromagnetic field is explored in the next sections.
\newline\newline
For the next proposition we deform the momentum operator. Since the momentum operator is unbounded, we are as before obliged to show that deformation Formula (\ref{defcop}) is given as a
well-defined oscillatory integral. 
\begin{proposition}\label{lfps}Let 
$\mathbf{Q}(\mathbf{X})$ be a self-adjoint  operator of the form
\begin{align*}
 \mathbf{Q}(\mathbf{X}) =\mathbf{X}/ \vert\mathbf{X}\vert^{ n}, \qquad    n\in\mathbb{R},
\end{align*} and let $ \mathbf{P}_{ B,\mathbf{Q} }$ denote the deformed momentum operator (see Formula \ref{defcop}). Then, 
the scalar product $\langle \Psi, \mathbf{P}_{B,\mathbf{Q}}\Phi\rangle$ is bounded by a finite constant $C_{D}$ as follows,
\begin{align*}\vert
\langle \Psi, \mathbf{P}_{B,\mathbf{Q}}\Phi\rangle\vert&\leq
C_{D}\,\Vert \Psi\Vert, \qquad  \forall  \Psi\in \mathscr{H},\,\, \Phi \in \mathcal{E}\subseteq \mathscr{S}(\mathbb{R}^3).
\end{align*}  Therefore, the deformation of the unbounded momentum operator, given as an oscillatory
integral, is well-defined.  Moreover,  the explicit result of the deformation is given as
\begin{equation}\label{defp}
 {P}^j_{B,\mathbf{Q}}\Phi=\left(
 P^j+i(B  Q )^k [Q_k,P^j]  \right)
\Phi.
\end{equation}
 
\end{proposition}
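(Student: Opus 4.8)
The plan is to follow the same strategy used in the proof of Proposition \ref{wcfh}, exploiting the crucial simplification that the momentum operator is \emph{linear} in $\mathbf{P}$ rather than quadratic. As a consequence the adjoint action of $V(By)$ produces a bound of \textbf{first} order in $\mathbf{y}$, so the convergence argument becomes strictly simpler than in the Hamiltonian case, and the finiteness requirements on the domain are weaker.

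First I would compute the adjoint action of $V(By)$ on the momentum operator by the Baker-Campbell-Hausdorff formula, exactly as in Equation (\ref{aaxp}):
\begin{equation*}
\alpha_{By}(P^j) = V(By)\,P^j\,V(-By) = P^j + i(By)^k [Q_k, P^j],
\end{equation*}
where all higher-order terms vanish since $[Q_l,[Q_k,P^j]] = 0$ by the commutativity $[X_i,X_j]=0$ of the coordinate operators. Because $V(k)$ is unitary, a Cauchy-Schwarz estimate gives $\vert\langle \Psi, V(k)\alpha_{By}(P^j)\Phi\rangle\vert \leq \Vert\Psi\Vert\,\Vert\alpha_{By}(\mathbf{P})\Phi\Vert$, and applying the triangle inequality together with the bound (\ref{iny}), namely $\vert(By)_i\mathbf{e}^i\vert \leq \sqrt{2}\,\vert\mathbf{B}\vert\,\vert\mathbf{y}\vert$, I would obtain
\begin{equation*}
\Vert\alpha_{By}(\mathbf{P})\Phi\Vert \leq \Vert\mathbf{P}\Phi\Vert + \sqrt{2}\,\vert\mathbf{B}\vert\,\vert\mathbf{y}\vert\,\Vert[Q_k,P^j]\Phi\Vert \leq C_D\,(1+\vert\mathbf{y}\vert),
\end{equation*}
so that the matrix element is polynomially bounded to first order in $\mathbf{y}$.

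The finiteness of the constant $C_D$ is where Lemma \ref{cnt0} enters: both $\Vert\mathbf{P}\Phi\Vert$ and $\Vert[Q_k,P^j]\Phi\Vert$ are finite for $\Phi\in\mathcal{E}$, since by Equation (\ref{hs2}) the single commutator $[Q_k,P^j]$ is a combination of $\vert\mathbf{X}\vert^{-n}$ terms whose norms on $\mathcal{E}$ were already shown to be finite in the proof of Lemma \ref{cnt0}. With the polynomial bound in hand, the oscillatory integral converges by the same mechanism as before: choosing a factorized cut-off $\chi(\varepsilon k,\varepsilon y)=\chi_2(\varepsilon_2 k)\chi_1(\varepsilon_1 y)$, the $k$-integration produces $\delta(\mathbf{y})$ in the limit $\varepsilon_2\to 0$, and the remaining $\mathbf{y}$-integration then evaluates the integrand at $\mathbf{y}=0$, yielding a finite value. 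This establishes that the deformation of $\mathbf{P}$ is well-defined.

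Finally I would pass to the spectral measure representation (\ref{WC}), which is legitimate once well-definedness is secured, and compute
\begin{equation*}
\mathbf{P}_{B,\mathbf{Q}}\Phi = \int dE(y)\,\alpha_{By}(\mathbf{P})\Phi = \int dE(y)\,\bigl(P^j + i(By)^k[Q_k,P^j]\bigr)\Phi = \bigl(P^j + i(BQ)^k[Q_k,P^j]\bigr)\Phi,
\end{equation*}
which is the asserted Formula (\ref{defp}). I expect the only genuine technical point to be the finiteness of the commutator norm on $\mathcal{E}$; but this is already subsumed in the two estimates (\ref{hs}) of Lemma \ref{cnt0}, so no new work is required, and the linear — rather than quadratic — $\mathbf{y}$-dependence makes this proposition strictly easier than Proposition \ref{wcfh}.
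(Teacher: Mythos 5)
Your proposal is correct and follows essentially the same route as the paper's own proof: the adjoint action from Equation (\ref{aaxp}), Cauchy--Schwarz together with Inequality (\ref{iny}) to get the first-order polynomial bound $C_D(1+\vert\mathbf{y}\vert)$ with finiteness of the constants supplied by Lemma \ref{cnt0}, convergence of the oscillatory integral via the factorized cut-off producing $\delta(\mathbf{y})$, and finally the spectral-measure representation to read off Formula (\ref{defp}). Your closing observation that the linear dependence on $\mathbf{P}$ makes this strictly easier than Proposition \ref{wcfh} matches the paper's treatment, which indeed reuses the Hamiltonian argument in abbreviated form.
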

\begin{proof} 
As in the proof of the former proposition we show that   $\vert{\langle \Psi, 
V(k)\alpha_{By}(\mathbf{P})\Phi\rangle}\vert$, is polynomially bounded.  To do so, we use the adjoint action of the unitary operator $V(By)$ on the momentum operator (see Equation (\ref{aaxp})) and   the Cauchy-Schwarz inequality, 
\begin{align*}
\vert{\langle \Psi, 
V(k)\alpha_{By}(\mathbf{P})\Phi\rangle}\vert& 
\leq \Vert \Psi\Vert  
\left\Vert 
\left( \mathbf{P} +i(By)^j [Q_j,\mathbf{P}]\right)\Phi\right\Vert\\&
\leq\Vert \Psi\Vert  
\left(\underbrace{
\left\Vert   \mathbf{P} \Phi \right\Vert}_{=:C_5}+  \vert  \mathbf{y}
 \vert \underbrace{
 {\sqrt{2}\vert\mathbf{B}\vert} 
\left\Vert  [\mathbf{Q},\mathbf{P}] \Phi \right\Vert}_{=:C_{6}}  \right) \\
&\leq C_{D}\Vert \Psi\Vert  
\left(1+\vert  \mathbf{y}
 \vert \right) 
,
\end{align*}
where  we used Inequality (\ref{iny}) and the fact that a finite constant $C_{D}$ obeying the inequality  exists, since $ C_{5}$ and $C_{6}$ are finite for $\Phi \in \mathcal{E}$ (see Lemma \ref{cnt0}).  Therefore,  the whole expression is polynomially bounded to  first  order in $y$, i.e. 
\begin{align*}\frac{
\vert\langle \Psi, \mathbf{P}_{B,\mathbf{Q}}\Phi\rangle\vert}{  C_{D}
{\Vert  \Psi\Vert}} &\leq 
(2\pi)^{-3}   
\lim_{\varepsilon\rightarrow 0}
  \iint  \, d^{3}y \,  d^{3}k \, e^{-iy_{l}k^{l}}  \, \chi(\varepsilon y,\varepsilon
k) \left(1+\vert  \mathbf{y}
 \vert \right)  \\&
 = (2\pi)^{-3}
\lim_{\varepsilon_1\rightarrow 0}  \left(
\int d^3y \lim_{\varepsilon_2\rightarrow 0} 
\left(\int d^3k  e^{-ik_{r}y^r}
\chi_2(\varepsilon_2 k)\right)\,\chi_1(\varepsilon_1  y)\,
 \left(1+\vert  \mathbf{y}
 \vert \right)   \right)
\\
&= 
\lim_{\varepsilon_1\rightarrow 0}  \left(
\int d^3y \,
\delta(\mathbf{y} )\,\chi(\varepsilon_1  y)\,
 \left(1+\vert  \mathbf{y}
 \vert \right)  \right) 
\\&= 1.
\end{align*}
As before, we argue that due to the convergence of the integral the deformation of the momentum operator for all $\Psi \in \mathscr{H}$ and $\Phi \in \mathcal{E}$   is well-defined. \newline\newline Next, we turn to the actual result of the deformation and again for simplicity we use the spectral measure for deformation,
\begin{align*}
 {P}^j_{B,\mathbf{Q}}\Psi
&=\int \, dE(y) \,\alpha_{By} \left(
 {P}^j
\right)\Psi \nonumber \\&
=\int \, dE(y) \,    \left( 
P^j+i(By)_s [Q^s,P^j]    \right )     \Psi\nonumber \\&
=   \left( 
P^j +i(BQ  )_s [Q^s,P^j]      \right)  \Psi.
\end{align*}
\end{proof}$\,$\newline
Since the deformed Hamiltonian could  be defined as the scalar product of the deformed momentum operators, we need to investigate the possible outcome. The investigation of the arbitrariness in the definition of the deformed free Hamiltonian is subject of the following theorem.

\begin{theorem}   
The scalar product of the deformed momentum vectors is equal to the deformed free Hamiltonian (see Equation \ref{defh}),
i.e. 
\begin{equation*}\label{d1}
 (H_{0})_{B,\mathbf{Q}}\Psi=-\frac{1}{2m}P_j^{B,\mathbf{Q}}P_{B,\mathbf{Q}}^j
\Psi, \qquad \Psi \in\mathcal{E}\subseteq \mathscr{S}(\mathbb{R}^3).
\end{equation*}
 
\end{theorem}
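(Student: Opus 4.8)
The plan is to prove the identity by directly comparing the two closed forms already at our disposal: the explicit deformed Hamiltonian (\ref{defh}) of Proposition \ref{wcfh} and the explicit deformed momentum (\ref{defp}) of Proposition \ref{lfps}. Everything takes place on the common dense domain $\mathcal{E}\subseteq\mathscr{S}(\mathbb{R}^3)$ of Lemma \ref{cnt0}, on which both deformed operators are well-defined and which is stable under the momenta and under the functions of $\mathbf{X}$ occurring in the commutators; this guarantees that the composition $P_j^{B,\mathbf{Q}}P_{B,\mathbf{Q}}^j$ makes sense there. First I would insert (\ref{defp}) twice into the right-hand side,
\[
-\frac{1}{2m}P_j^{B,\mathbf{Q}}P_{B,\mathbf{Q}}^j\Psi
=-\frac{1}{2m}\left(P_j+i(BQ)^k[Q_k,P_j]\right)\left(P^j+i(BQ)^r[Q_r,P^j]\right)\Psi,
\]
and expand the operator product, comparing it term by term with the right-hand side of (\ref{defh}).

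Both expressions decompose into four contributions: a kinetic term $P_jP^j$, two terms linear in $BQ$, and one term quadratic in $BQ$. The kinetic term matches trivially. The quadratic term matches as well, since every factor appearing there---the coefficients $(BQ)$ and the commutators $i[Q_k,P_j]$---is by (\ref{hs2}) a multiplication operator, i.e. a function of $\mathbf{X}$ alone; such factors mutually commute, so their ordering is immaterial. Likewise the linear term in which the $BQ$-coefficient already stands to the left of the momentum coincides at once. The sole genuine point to settle is the remaining cross term: the deformed Hamiltonian $(H_0)_{B,\mathbf{Q}}$, being a \emph{single} warped convolution, carries its $BQ$-coefficient to the left of the momenta, as its defining spectral integral $\int dE(y)$ shows (with $y$ a numerical variable replaced by $\mathbf{Q}$ only after integration), whereas in the composition of two separately warped momenta a momentum $P_j$ stands to the \emph{left} of the $BQ$-coefficient of the second factor. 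The two orderings differ by the commutator $[P_j,\mathcal{A}^j]$, where $\mathcal{A}^j:=i(BQ)^r[Q_r,P^j]$ is a function of $\mathbf{X}$.

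The hard part, and the actual content of the theorem, is to show that this ordering correction vanishes. Using (\ref{hs2}), $\mathcal{A}^j$ is a function of $\mathbf{X}$ whose $nX_rX^j/\vert\mathbf{X}\vert^2$-part already drops out upon contraction with the skew-symmetric $B$, leaving $\mathcal{A}^j\propto(BX)^j\vert\mathbf{X}\vert^{-2n}$. The remaining commutator $[P_j,\mathcal{A}^j]$ then splits into a piece proportional to the trace $B^{j}{}_{j}$ and a piece proportional to $B_{jm}X^mX^j$; both vanish because $B$ is skew-symmetric---being traceless, and being contracted against the symmetric tensor $X^mX^j$. The crux is thus the skew-symmetry of $B$, the very property on which the consistency of the warped-convolution deformation rests. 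With the correction eliminated, all four terms agree, and I conclude $(H_0)_{B,\mathbf{Q}}\Psi=-\frac{1}{2m}P_j^{B,\mathbf{Q}}P_{B,\mathbf{Q}}^j\Psi$ on $\mathcal{E}$. I expect this skew-versus-symmetric cancellation of the cross term to be the only nontrivial step; the matching of the other three terms and the domain bookkeeping are routine given Propositions \ref{wcfh} and \ref{lfps}.
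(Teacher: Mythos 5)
Your proof is correct, but it follows a genuinely different route from the paper's. The paper never expands the composition $P_j^{B,\mathbf{Q}}P^j_{B,\mathbf{Q}}$ term by term. Instead it invokes Lemma \ref{l2.1}, the homomorphism property $A_{B}E_{B}\Phi=(A\times_{B}E)_{B}\Phi$ relating products of warped convolutions to the Rieffel product, and reduces the theorem to one computation: $(P_{k}\times_{B,\mathbf{Q}}P_{j})\Psi=(P_{k}P_{j}-iB_{ls}\,\partial_{k}Q^{l}\,\partial_{j}Q^{s})\Psi$, so that upon contraction $k=j$ the correction $B_{ls}\,\partial_{k}Q^{l}\,\partial^{k}Q^{s}$ vanishes, the Gram-type tensor $\partial_{k}Q^{l}\,\partial^{k}Q^{s}$ being symmetric in $(l,s)$ while $B$ is skew. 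Hence $P_{k}\times_{B,\mathbf{Q}}P^{k}=P_{k}P^{k}$, and warping both sides gives $(H_{0})_{B,\mathbf{Q}}=-\tfrac{1}{2m}(P_{k}\times_{B,\mathbf{Q}}P^{k})_{B,\mathbf{Q}}=-\tfrac{1}{2m}P_{k}^{B,\mathbf{Q}}P^{k}_{B,\mathbf{Q}}$. Both arguments turn on the same cancellation mechanism --- skew $B$ contracted against a symmetric tensor --- but deployed at different stages. What your route buys: it is more elementary, avoiding Lemma \ref{l2.1} and the oscillatory-integral evaluation of the Rieffel product (the ``equal odd order'' argument the paper imports from \cite{MUc}), and it honestly confronts the operator-ordering ambiguity hidden in the printed form of Equation (\ref{defh}), which the paper glosses over. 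What the paper's route buys: the Gram-symmetry cancellation works for an \emph{arbitrary} vector-valued $\mathbf{Q}(\mathbf{X})$, whereas your simplification $\mathcal{A}^{j}=(BX)^{j}\vert\mathbf{X}\vert^{-2n}$ and the ensuing trace/symmetric-contraction argument exploit the radial form $\mathbf{Q}=\mathbf{X}/\vert\mathbf{X}\vert^{n}$; and it exhibits the theorem as an instance of the structural compatibility of warped convolutions with the deformed product, which is exactly the group-theoretical moral the paper draws from the result.

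One small imprecision, which does not invalidate your proof: the spectral integral $\int dE(y)$ pulls only the scalar factor $(By)^{r}\mapsto(BQ)^{r}$ to the left of the momenta, not the whole coefficient $\mathcal{A}^{j}$; the cross term of the genuine warped convolution is $i(BQ)^{r}P_{j}[Q_{r},P^{j}]$, so the exact ordering correction against the literal composition is $i[P_{j},(BQ)^{r}][Q_{r},P^{j}]$ rather than $[P_{j},\mathcal{A}^{j}]$, the two differing by $i(BQ)^{r}[P_{j},[Q_{r},P^{j}]]$. Neither term survives: the first is proportional to $B^{rs}\,\partial_{j}Q_{s}\,\partial^{j}Q_{r}$ --- precisely the Gram contraction that kills the paper's Rieffel correction --- and the second is again a contraction of $B$ against a tensor proportional to $X_{s}X_{r}$, hence zero by the very mechanism you describe. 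So your conclusion stands; just be explicit about which ordering correction you are cancelling.
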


\begin{proof}
For the proof we calculate the Rieffel product, defined with the operator-valued
vector $\mathbf{Q}(\mathbf{X})$, of the deformed momentum vectors, i.e.
\begin{align*}
\left( P_{k}\times_{B,\mathbf{Q} }P_{j}\right)\Psi&= 
(2\pi)^{-3}
\lim_{\varepsilon\rightarrow 0}
\int\int d^{3}x\, d^{3}y \,\chi(\epsilon x,\epsilon y )\,e^{-ixy}\,  \alpha_{B x}( P_{k})\alpha_{y}(P_{j})\Psi\\&=
 \left(P_{k}P_{j}-iB_{ls}\partial_k Q^l  \partial_j Q^s\right)\Psi,
\end{align*}
where in the last lines we used the CCR and the fact that the only terms that do not vanish are those of
equal odd order in $x$ and $y$, (see proof of Lemma 5.3  in \cite{MUc}). Now by summing over all
components we obtain
 
\begin{align*}
\left( P_{k}\times_{B,\mathbf{Q} }P^{k}\right)\Psi&=  \left(P_{k}P^{k}-i\underbrace{B_{ls}\partial_k Q^l 
\partial^{k}
Q^s}_{=0}\right)\Psi= P_{k}P^{k}\Psi,
\end{align*}
  where we used the skew-symmetry of $B$ and commutativity of the coordinate operator. Thus, by
using the last equation and Lemma \ref{l2.1}  the following equality is given

\begin{align*}
 (H_{0})_{B,\mathbf{Q}}\Psi&=-\frac{1}{2m}\left( P_{k}P^{k}\right)_{B,\mathbf{Q}}\Psi
\\& =-\frac{1}{2m} \left( P_{k}\times_{B,\mathbf{Q} }P^{k}\right)_{B,\mathbf{Q}}\Psi
\\& =-\frac{1}{2m}P_k^{B,\mathbf{Q}}P_{B,\mathbf{Q}}^k
\Psi, \qquad \Psi \in \mathscr{S}(\mathbb{R}^3).
\end{align*}
\end{proof}$\,$\newline
This is an important result resolving the question of arbitrariness of the deformation. Moreover, it is a group
theoretical circumstance, since the deformation of a free Hamiltonian can   be understood as
the deformation of   generators of the central extended  Galilei (CEG) group (for CEG see for example
\cite{BAL}). The deformation with the coordinate operator leaves all generators of the group
invariant except for the momentum and the Hamiltonian. Since, the Hamiltonian is a function of the
momentum it follows from the former proposition that the deformation respects the structure of the
group. This fact is owed
to the deformed product. Also note that  the deformed momentum
operator \textbf{does not commute} along its components.
\newline\newline
As already mentioned in Section \ref{s2}, for some arguments we deform the coordinate operator by using the 
momentum operator. Before doing so, we show in the next proposition that the deformation formula is well-defined even though the coordinate operator is unbounded. 
\begin{proposition}\label{lfxs}
The scalar product $\langle \Psi, \mathbf{X}_{ \theta,\mathbf{P} }\Phi\rangle$ is bounded by a finite constant $C_{E}$ as follows,
\begin{align*}\vert
\langle \Psi, \mathbf{X}_{ \theta,\mathbf{P}  }\Phi\rangle\vert&\leq
C_{E}\,\Vert \Psi\Vert, \qquad  \forall  \Psi\in \mathscr{H},\,\, \Phi \in  \mathscr{S}(\mathbb{R}^3).
\end{align*}
Therefore, the deformation of the unbounded coordinate  operator, given as an oscillatory
integral, is well-defined.  Moreover,  the explicit result of the deformation is given as
\begin{equation}\label{defx}
 {X}^j_{\theta,\mathbf{P} }\Psi= \left( X^j-(\theta  P )^j\right) 
\Psi, \qquad \Psi \in  \mathscr{S}(\mathbb{R}^3).
\end{equation}
 
\end{proposition}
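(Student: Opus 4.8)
The plan is to follow the template already established in the proofs of Propositions \ref{wcfh} and \ref{lfps}, but now with the roles of position and momentum interchanged: here the automorphism is generated by the translations $V(y)=e^{iy_kP^k}$, and the operator being deformed is $X^j$ rather than $P^j$. The first and decisive step is to compute the adjoint action $\alpha_{\theta y}(X^j)=V(\theta y)\,X^j\,V(-\theta y)$ via the Baker--Campbell--Hausdorff expansion. Using the CCR (\ref{ccr}) in the form $[P^k,X^j]=i\eta^{jk}$, the leading commutator is a pure $c$-number, so every higher nested commutator $[P^l,[P^k,X^j]]$ vanishes and the series terminates after the linear term. This yields the exact identity $\alpha_{\theta y}(X^j)=X^j-(\theta y)^j$, which is linear in $y$ --- one polynomial degree lower than the free-Hamiltonian case and on par with the momentum deformation of Proposition \ref{lfps}.

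With this in hand I would next establish the polynomial bound. By Cauchy--Schwarz and unitarity of $V(k)$,
\begin{align*}
\vert\langle\Psi,V(k)\,\alpha_{\theta y}(X^j)\Phi\rangle\vert
&\leq\Vert\Psi\Vert\,\bigl\Vert\bigl(X^j-(\theta y)^j\bigr)\Phi\bigr\Vert
\\&\leq\Vert\Psi\Vert\Bigl(\Vert X^j\Phi\Vert+\vert(\theta y)^j\vert\,\Vert\Phi\Vert\Bigr)
\leq C_{E}\,\Vert\Psi\Vert\,\bigl(1+\vert\mathbf{y}\vert\bigr),
\end{align*}
where $\Vert X^j\Phi\Vert<\infty$ holds automatically for $\Phi\in\mathscr{S}(\mathbb{R}^3)$, since multiplication by a coordinate preserves the Schwartz class, and $\vert(\theta y)^j\vert\leq\Vert\theta\Vert\,\vert\mathbf{y}\vert$ (matrix norm of $\theta$) supplies the linear growth. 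This is precisely why, in contrast to the free Hamiltonian, no recourse to the delicate domain $\mathcal{E}$ of Lemma \ref{cnt0} is required: the singular weights $\vert\mathbf{X}\vert^{-n}$ that forced that construction are absent here, so the full Schwartz space suffices, consistent with the hypothesis $\Phi\in\mathscr{S}(\mathbb{R}^3)$ stated in the proposition.

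Given the linear bound, convergence of the oscillatory integral follows verbatim from the previous two proofs: I would split the cut-off as $\chi(\varepsilon y,\varepsilon k)=\chi_2(\varepsilon_2 k)\chi_1(\varepsilon_1 y)$, perform the $k$-integral to produce $\delta(\mathbf{y})$ in the limit $\varepsilon_2\to0$, and then integrate $(1+\vert\mathbf{y}\vert)$ against $\delta(\mathbf{y})$, which is finite and independent of the cut-off. Hence the deformation is well-defined for all $\Psi\in\mathscr{H}$ and $\Phi\in\mathscr{S}(\mathbb{R}^3)$. For the explicit result I would pass to the spectral-measure representation (\ref{WC}) and integrate the linear expression against $dE(y)$, where $E$ is now the spectral resolution of the translations; using $\int dE(y)=\mathds{1}$ and $\int y_k\,dE(y)=P_k$ yields immediately $X^j_{\theta,\mathbf{P}}\Psi=\bigl(X^j-(\theta P)^j\bigr)\Psi$, which is Formula (\ref{defx}).

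Since the BCH series truncates at first order and $\Vert X^j\Phi\Vert$ is finite on all of $\mathscr{S}(\mathbb{R}^3)$, there is in fact no genuine obstacle here; this is the most elementary of the three deformation results, and most of the work is inherited from the earlier propositions. The only point demanding a modicum of care is the identification $\int y_k\,dE(y)\,\Psi=P_k\Psi$ in the final step, which requires $\Psi$ to lie in the domain of $P_k$ --- guaranteed because $\Psi\in\mathscr{S}(\mathbb{R}^3)$.
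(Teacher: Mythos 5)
Your proposal is correct and follows essentially the same route as the paper's proof: the truncating BCH expansion of $\alpha_{\theta y}(X^j)$, a Cauchy--Schwarz estimate yielding a bound linear in $\vert\mathbf{y}\vert$ with constants finite on all of $\mathscr{S}(\mathbb{R}^3)$, the cut-off/delta-function argument inherited from Proposition \ref{lfps}, and the spectral-measure computation giving $X^j-(\theta P)^j$. Your added remarks --- that no recourse to the domain $\mathcal{E}$ is needed and that the final step uses $\int y_k\,dE(y)\,\Psi=P_k\Psi$ on the domain of $P_k$ --- are accurate and merely make explicit what the paper leaves implicit.
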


\begin{proof}
Similar to the proofs of the former propositions we show that the scalar product  $\vert{\langle \Psi, 
V(k)\alpha_{\theta y}(\mathbf{X})\Phi\rangle}\vert$ is polynomially bounded,
\begin{align*}
\vert{\langle \Psi, 
V(k)\alpha_{\theta y}(\mathbf{X})\Phi\rangle}\vert& 
\leq \Vert \Psi\Vert  
\left\Vert 
\left( \mathbf{X} +i(\theta y)^j [P_j,\mathbf{X}]\right)\Phi\right\Vert\\&
\leq\Vert \Psi\Vert  
\left(\underbrace{
\left\Vert   \mathbf{X} \Phi \right\Vert}_{=:C_7}+  \vert  \mathbf{y}
 \vert \underbrace{
 {\sqrt{2}\vert\mathbf{\theta}\vert} 
\left\Vert   \Phi \right\Vert}_{=:C_{8}}  \right) \\
&\leq C_{E}\Vert \Psi\Vert  
\left(1+\vert  \mathbf{y}
 \vert \right) 
,
\end{align*}
where in the last lines we used Inequality (\ref{iny}), and the fact that a finite constant $C_{E}$ obeying the inequality  exists, since $ C_{7}$ and $C_{8}$ are finite for $\Phi \in \mathscr{S}(\mathbb{R}^3)$.  Therefore,  the whole expression is polynomially bounded to the first  order in $y$, i.e. 
\begin{align*} {
\vert\langle \Psi, \mathbf{X}_{ \theta,\mathbf{P}  }\Phi\rangle\vert} &\leq 
 C_{E}
{\Vert  \Psi\Vert},
\end{align*}
where we used the same arguments made in Proposition \ref{lfps}.
  \newline\newline Next, we turn to the result of the deformation and again for simplicity we use the spectral measure  for the deformation,
 
\begin{align*}
 {X}^j_{\theta,\mathbf{P}   }\Psi
&=\int \, dE(y) \,\alpha_{\theta y} \left(
 {X}^j
\right)\Psi \nonumber \\&
=\int \, dE(y) \,    \left( 
X^{j}+i(\theta y)_s [P^s,X^j]    \right )     \Psi\nonumber \\&
=   \left( 
X^{j}-(\theta P  )^{j}    \right)  \Psi.
\end{align*}
After this rather more technical part we turn in the next section to the physical implications of the deformation technique.

 \end{proof}
 
  \section{Physical models from deformation}\label{s4}

One of the most important aspects of the interplay between mathematics and physics lies in
the \textbf{physical dimensionality} of the physical constants. The main motivation of this work is the search for the physical meaning of the deformation
parameter. Quantum mechanical deformations give us a variety of interesting answers and they are presented in this section.
\subsection{Landau quantization} 
An example of a dynamical system interacting with a magnetic field in a
quantum mechanical setting, is given by the Landau effect. It is also an important example of the
appearance of quantum space in a physical context. The Landau effect describes the
dynamics of a system of non-relativistic electrons \textbf{confined to a plane}, for example the
$y-z$ plane ($\vec{A}=(0,y,z)$), in the presence of a homogeneous magnetic
field  
$\vec{B}=B(1,0,0)$. In the \textbf{symmetric gauge} the Hamiltonian of the Landau effect is
given by, \cite[Equation 9.2.1]{EZ}
\begin{equation*}
 H_{L}=-\frac{1}{2m}\left(P_{i}+eA_{i}\right)\left(P^{i}+eA^{i}\right),
\end{equation*}
where the gauge field is given as 
\begin{equation}\label{gf1}
A_{i}=-\frac{1}{2}\varepsilon_{ijk}{B}^{k}X^{j}.
\end{equation} 
Next, we show that the deformed Hamiltonian $(H_{0})_{B,\mathbf{X}}$ reproduces the Landau
model after setting the parameters  of the deformation matrix  
equal to a constant with physical dimension. This is the result of the following lemma. 
\newline  
\begin{lemma}\label{llq}
Let the \textbf{deformation matrix}   $B_{ij} $ be given as,
\begin{equation*}
B_{ij}=-(e/2)\,\varepsilon_{ijk}B^{k},
\end{equation*}
 where $B^k$ characterizes a constant  
homogeneous magnetic field and  $e$ is the electric charge. Then, the deformed free
Hamiltonian
$(H_{0})_{B,\mathbf{X}}$ becomes the Hamiltonian  $H_{L}$ of the Landau  problem, i.e. 
\begin{equation*}
 (H_{0})_{B,\mathbf{X}}\Psi= H_{L}\Psi, \qquad \Psi \in   \mathcal{E}.
\end{equation*}
\end{lemma}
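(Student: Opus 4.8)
The plan is to treat Lemma~\ref{llq} as an immediate specialization of Proposition~\ref{wcfh}, since the genuinely analytic work---establishing that the oscillatory integral defining $(H_0)_{B,\mathbf{Q}}$ converges and produces the explicit shifted-momentum operator~(\ref{defh})---has already been carried out there. First I would set $\mathbf{Q}(\mathbf{X})=\mathbf{X}$, i.e. take $n=0$ in the family $\mathbf{X}/\vert\mathbf{X}\vert^{n}$. For $n=0$ the domain requirements of Lemma~\ref{cnt0} hold on $\mathcal{E}\subseteq\mathscr{S}(\mathbb{R}^3)$; in fact this is the simplest member of the $n\in\mathbb{R}^{-}_0$ case, because the coefficients $a(n)=n^2-3n$ and the factors of $n$ multiplying the negative powers of $\vert\mathbf{X}\vert$ all vanish, leaving only manifestly finite polynomial norms. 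Hence formula~(\ref{defh}) is available on $\mathcal{E}$ with no further estimate needed.

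The core computation is then purely algebraic. In~(\ref{defh}) the operator-valued coefficient of the shift is $i(BQ)^k[Q_k,P_j]$; with $\mathbf{Q}=\mathbf{X}$ the commutator collapses to a $c$-number via the CCR~(\ref{ccr}), namely $[X_k,P_j]=-i\eta_{kj}$. Substituting gives
\begin{equation*}
 i(BX)^k[X_k,P_j]=i(BX)^k(-i\eta_{kj})=(BX)^k\eta_{kj}=(BX)_j=B_{jk}X^k,
\end{equation*}
so that (\ref{defh}) reduces to $-\tfrac{1}{2m}(P_j+(BX)_j)(P^j+(BX)^j)$, a minimally coupled Hamiltonian whose effective gauge potential is the linear field $(BX)_j$.

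The final step is to match this shift with the Landau gauge potential. Inserting the prescribed deformation matrix $B_{jk}=-(e/2)\,\varepsilon_{jkl}B^l$ yields $(BX)_j=-(e/2)\,\varepsilon_{jkl}B^l X^k$, which I would compare termwise with $eA_j$ read off from~(\ref{gf1}): after relabeling the contracted indices, $A_j=-\tfrac12\varepsilon_{jkl}B^l X^k$, so that $eA_j=-(e/2)\varepsilon_{jkl}B^l X^k=(BX)_j$. The two expressions coincide, giving $(H_0)_{B,\mathbf{X}}\Psi=H_L\Psi$ for $\Psi\in\mathcal{E}$.

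I do not expect any real analytic obstacle here; the only delicate point is the index and sign bookkeeping. In particular I would verify carefully that the factor $i$ in the warped-convolution shift cancels against the $-i$ of the CCR so that the resulting potential is real, and that the antisymmetry convention $B_{ij}=\varepsilon_{ijk}B^j$ used in Proposition~\ref{wcfh} is consistent with the physical normalization $B_{ij}=-(e/2)\varepsilon_{ijk}B^k$ chosen here. The conceptual content worth emphasizing---that fixing the dimensionful deformation constant to $-(e/2)B^k$ converts an abstract deformation into genuine minimal electromagnetic coupling---requires no argument beyond the substitution above.
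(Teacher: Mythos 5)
Your proposal is correct and follows essentially the same route as the paper: specialize the explicit deformation formula~(\ref{defh}) of Proposition~\ref{wcfh} to $\mathbf{Q}=\mathbf{X}$, collapse the shift $i(BX)^k[X_k,P_j]$ to $B_{jk}X^k$ via the CCR, and then identify $B_{jk}X^k$ with $eA_j$ for the choice $B_{ij}=-(e/2)\,\varepsilon_{ijk}B^k$ and the gauge field~(\ref{gf1}). Your added remarks (checking that $n=0$ falls in the $n\in\mathbb{R}^-_0$ case of Lemma~\ref{cnt0}, and the sign/index bookkeeping) only make explicit what the paper leaves implicit.
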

\begin{proof}
For the proof we consider the free deformed Hamiltonian $(H_{0})_{B,\mathbf{X}}$,  given in Equation (\ref{defh}), with $Q_{j}=X_{j}$ 
\begin{align} \label{defhx}\nonumber
(H_{0})_{B,\mathbf{X}}\Psi&=
-\frac{1}{2m}\left
(P_j+i(B  X )^k [X_k,P_j]  \right)\left(P^j+i(B  Q )^r [X_r,P^j]  \right)\Psi\\&=
-\frac{1}{2m}
(P_j+B_{jk}X^k)(P^j+B^{jr}X_r)\Psi, \qquad \Psi \in   \mathcal{E},
\end{align}
where in the last line we used the CCR. 
By setting the deformation matrix equal to $B_{ij}=-(e/2)\,\varepsilon_{ijk}B^{k}$, where $B^{k}$ is a homogeneous magnetic
field in the $x$-direction ($B^{k}=B(1,0,0)$), we obtain the Landau quantization. 
\end{proof}$\,$\newline
This is an interesting result. We started with the free Hamiltonian and deformed it with warped
convolutions
using the coordinate operator. By simply taking the deformation parameters of the matrix $B_{ij}$ to be equal to certain
 physical quantities   we
obtain the Landau problem. Therefore, the quantization with the coordinate operator is
physically of great importance. Note that our model is formulated in a general manner, and just
for the specific choice of the deformation parameters  we obtained the Landau
effect. 
\newline\newline A remark is in order about the current result. It is well-known  that the noncommutative coordinates of the Landau quantization can be generated by minimally shifting the ordinary coordinate operator by a skew-symmetric matrix times the momentum operator. This rather ad hoc  but remarkably insightful  result is well-known as the Bopp-shift. In the context of deformation theory, we were able to give a systematic derivation of the Landau quantization, rather than postulating ad hoc a substitution. This derivation can be further applied to a variety of quantum mechanical effects involving gauge fields.
\subsection{Zeemaneffect }
The Hamiltonian of the hydrogen atom is
given as follows, \cite[Equation 4.1.1]{Th}

 \begin{equation*}\label{ha}
H^{A}{}= -\frac{P_jP^j}{2m}+\frac{e^2}{\vert\mathbf{X}\vert}.
\end{equation*}
By solving the stationary Schr\"odinger equation $H^{A}\psi=E\psi$ one obtains the energy
spectrum of a hydrogen atom, the so called \textbf{Balmer series}, \cite{St}. In the presence of
a constant magnetic field,
an interesting physical effect occurs to the spectral lines of the hydrogen atom. The
spectral lines split into further spectral lines depending on the presence of a
homogeneous magnetic field ${B}_{k}$. This phenomenon is called the  \textbf{Zeemaneffect} and the
Hamiltonian of this effect is given as follows, \cite[Equation 4.2.1]{Th}
 \begin{equation}\label{hab}
H^{AZ}=
-\frac{1}{2m}{(P_j-(e/2)\,\,\varepsilon_{jik}B^{k}X^i)(P^j-(e/2)\,\,\varepsilon^{jnl}B_{l}X_n)}+\frac{e^2}{
\vert\mathbf{X}\vert}.
\end{equation} 
We recognized in the last section that the deformation with the coordinate operator
induces a gauge field. Due to this lesson we preform a deformation on the Hamiltonian of
the hydrogen atom to obtain the Hamiltonian of the Zeemanneffect.  
\newline  
\begin{lemma}\label{cnt2}
Let the \textbf{deformation matrix}   $B_{ij} $ be given as,
\begin{equation*}
B_{ij}=-(e/2)\,\varepsilon_{ijk}B^{k},
\end{equation*}
 where $B^k$ characterizes a constant  
homogeneous magnetic field.  Then, the deformed Hamiltonian
of the hydrogen atom, denoted by $(H^{A})_{B,\mathbf{X}}$, becomes the
Hamiltonian of the Zeemaneffect $H ^{AZ}$, i.e.
\begin{equation*}
(H^{A})_{B,\mathbf{X}}\Psi= H ^{AZ}\Psi, \qquad \Psi \in   \mathcal{E}.
\end{equation*}
 
\end{lemma}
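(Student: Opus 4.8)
The plan is to reduce the statement to Lemma \ref{llq} by exploiting the linearity of the warped-convolution deformation together with the observation that a deformation performed with the coordinate operator leaves every function of the coordinate operator invariant. Concretely, I would split the hydrogen Hamiltonian as
\begin{equation*}
H^{A}=H_{0}+\frac{e^{2}}{\vert\mathbf{X}\vert},
\end{equation*}
and, since Formula (\ref{defcop}) is linear in the operator being deformed, write $(H^{A})_{B,\mathbf{X}}=(H_{0})_{B,\mathbf{X}}+\left(e^{2}/\vert\mathbf{X}\vert\right)_{B,\mathbf{X}}$ on $\mathcal{E}$. Here the deformation is carried out with $\mathbf{Q}=\mathbf{X}$ (the case $n=0$ of Lemma \ref{cnt0}), so the generating unitary is $V(y)=e^{iy_{k}X^{k}}$.

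For the free part I would simply invoke Lemma \ref{llq}: with the prescribed choice $B_{ij}=-(e/2)\,\varepsilon_{ijk}B^{k}$, Equation (\ref{defhx}) gives $(H_{0})_{B,\mathbf{X}}=-\frac{1}{2m}(P_{j}+B_{jk}X^{k})(P^{j}+B^{jr}X_{r})$, and inserting $B_{ij}$ reproduces, after relabeling the epsilon-tensor indices, exactly the gauge-covariant kinetic term of the Zeeman Hamiltonian (\ref{hab}). The heart of the argument is the Coulomb term. Because $e^{2}/\vert\mathbf{X}\vert$ is a function of $\mathbf{X}$ alone and $V(By)=e^{i(By)_{k}X^{k}}$ commutes with every function of $\mathbf{X}$, the adjoint action is trivial, $\alpha_{By}(e^{2}/\vert\mathbf{X}\vert)=e^{2}/\vert\mathbf{X}\vert$. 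Passing to the spectral-measure representation (\ref{WC}), in which $E$ is the joint spectral resolution of $\mathbf{X}$, the deformation then collapses to
\begin{equation*}
\left(\frac{e^{2}}{\vert\mathbf{X}\vert}\right)_{B,\mathbf{X}}\Phi=\int dE(y)\,\frac{e^{2}}{\vert\mathbf{X}\vert}\,\Phi=\frac{e^{2}}{\vert\mathbf{X}\vert}\,\Phi,
\end{equation*}
using $\int dE(y)=\mathds{1}$ and the commutativity of the potential with $E$. Hence the Coulomb interaction survives the deformation unchanged, and combining the two contributions gives $(H^{A})_{B,\mathbf{X}}=H^{AZ}$ on $\mathcal{E}$.

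The only point requiring genuine care, rather than a true obstacle, is to confirm that the oscillatory integral defining $(H^{A})_{B,\mathbf{X}}$ converges on $\mathcal{E}$, so that the two representations may be interchanged exactly as in Propositions \ref{wcfh} and \ref{lfps}. For the free part this is precisely Proposition \ref{wcfh}, which yields a bound polynomial of second order in $y$. For the Coulomb part the $\alpha$-invariance makes the integrand independent of $y$ apart from the unitary $V(k)$, so $\vert\langle\Psi,V(k)(e^{2}/\vert\mathbf{X}\vert)\Phi\rangle\vert\leq\Vert\Psi\Vert\,\Vert(e^{2}/\vert\mathbf{X}\vert)\Phi\Vert$ is bounded uniformly in $y$; the singularity of $\vert\mathbf{X}\vert^{-1}$ at the origin is harmless because every $\Phi\in\mathcal{E}$ is smooth and vanishes fast enough there for $\vert\mathbf{X}\vert^{-1}\Phi$ to remain square-integrable. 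Adding the two bounds keeps the total polynomially bounded, convergence follows as before, and the identification with $H^{AZ}$ is complete.
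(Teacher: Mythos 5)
Your proposal is correct and follows essentially the same route as the paper: the paper's proof likewise observes that, since the coordinate operator commutes with itself, the Coulomb term is invariant under the deformation and only the kinetic part is affected, reducing the claim to the Landau computation of Lemma \ref{llq}. The extra details you supply (linearity of Formula (\ref{defcop}), the explicit collapse $\left(e^{2}/\vert\mathbf{X}\vert\right)_{B,\mathbf{X}}=e^{2}/\vert\mathbf{X}\vert$ via the spectral measure, and the convergence check) are sound elaborations of what the paper leaves implicit.
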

\begin{proof}
Due to the fact that the coordinate operator commutes with itself the only part of the
Hamiltonian $H ^{A}$ which is affected is the free part and therefore we obtain 
 \begin{equation}
(H^{A})_{B,\mathbf{X}}\Psi=\biggl(
-\frac{1}{2m}
(P_j+B_{jk}X^k)(P^j+B^{jr}X_r)+\frac{e^2}{\vert\mathbf{X}\vert}\biggr)\Psi,
\end{equation} 
the Hamiltonian of the Zeemaneffect for a homogeneous magnetic field in the
$x$-direction, i.e. $(H^{A})_{B,\mathbf{X}}= H ^{AZ}$. 
\end{proof}$\,$\newline
As in the case of Landau quantization, the deformation parameter plays the role of the magnetic
field which leads to
this wonderful physical effect.  
\subsection{Aharonov-Bohm effect}
In the last sections we recognized the consequence of a deformation with the coordinate
operator. Warped convolutions with the coordinate operator induce a \textbf{gauge field}. Now
since we work in a quantum mechanical setting we want to  
reproduce other physical effects where magnetic fields play a significant role. One of
the most striking ones is the  \textbf{Aharanov-Bohm (AB) effect}. It takes place in a system in
which the gauge field   influences the dynamics of a charged
particle even in regions where the magnetic field  vanishes, \cite{Be,EZ}. 
 The gauge field  of the magnetic AB
effect, for a homogeneous magnetic field in $x$-direction, takes the following form 
\begin{equation}\label{cntab}
 A_{i}=\frac{\phi_{M}}{2\pi(X_{2}^2+X_{3}^2)} \varepsilon_{ijk}e^{k} X^{j},
\end{equation}
where $\phi_{M}$ is the  \textbf{magnetic flux} and $e^{k}$ is the unit vector in
$x$-direction. Moreover, from quantum  mechanical considerations it follows that the
\textbf{interference  pattern} is the same for two values of fluxes $\phi_{ 1}$ and $\phi_{ 2}$
if only if 
\begin{equation}\label{mfdp}
 e(\phi_{1}-\phi_{2 })=2\pi n, \qquad n \in \mathbb{Z}.
\end{equation} In this section we take the free Hamiltonian  and deform it with a vector-valued function of the coordinate operator. As before, after  
setting  the   deformation parameter equal to a physical
constant, namely that of a magnetic flux, we obtain the AB effect.

\begin{proposition}
Let the \textbf{deformation matrix}  $B_{ij}$ and the operator $Q_j(\mathbf{X})$  be given as
\begin{equation}\label{opab}
B_{ij}=- \frac{e\,\phi_{M}}{2\pi} \varepsilon_{ijk}e^{k} ,\qquad Q_j(\mathbf{X}):={X_j}/{( -\sum\limits_ {s=2}^3X_sX^s)^{1/2}},
\end{equation}  where $\phi_{M}$ characterizes the magnetic flux.
Then, the deformed Hamiltonian $(H_0)_{B,\mathbf{Q}}$, is equal to Hamiltonian of the Aharonov-Bohm, i.e.  
\begin{align*} 
(H_{0})_{B,\mathbf{Q}}\Psi
&=  \frac{1}{2m}  \left( 
\mathbf{P} -e \mathbf{A}   \right)^2 
 \Psi,
\end{align*} 
where $\mathbf{A}$  is the gauge field of the Aharanov-Bohm effect (see Equation \ref{cntab}).
Furthermore, if the deformation
parameters of the matrices $B_1$ and $B_2$ fulfill
Equation (\ref{mfdp}), the physical systems described by the Hamiltonians 
 $H_{B_1, F }$ and $H_{B_2, F }$   have the same
interference pattern.
\end{proposition}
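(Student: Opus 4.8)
The plan is to treat the two assertions separately, first verifying the explicit form of the deformed Hamiltonian and then addressing the flux-periodicity of the interference pattern. For the first part I would start from the closed expression \eqref{defh} for $(H_0)_{B,\mathbf{Q}}$ and insert the prescribed data \eqref{opab}. The entire content of the equality is that the non-constant momentum shift $i(BQ)^k[Q_k,P_j]$ reproduces $-eA_j$, with $A_j$ the Aharonov--Bohm potential \eqref{cntab}; once this is established, the factorised structure of \eqref{defh} yields $\tfrac{1}{2m}(\mathbf{P}-e\mathbf{A})^2$ term by term, up to the metric sign already fixed by the convention in \eqref{fh}.

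To carry out the shift computation I would first observe that, since $B_{ij}=-\tfrac{e\phi_M}{2\pi}\varepsilon_{ij1}$ projects onto the $2$--$3$ plane, only the components $Q_2,Q_3$ enter the contraction $B^{kl}Q_l[Q_k,P_j]$. Writing $\rho=(X_2^2+X_3^2)^{1/2}$, the pair $(Q_2,Q_3)=(X_2,X_3)/\rho$ is a planar unit vector, so using $[Q_k,P_j]=-i\,\partial_j Q_k$ (which follows from the CCR \eqref{ccr} as in the derivation of \eqref{hs2}) the antisymmetric contraction collapses to the gradient of the planar polar angle $\theta$, namely $B^{kl}Q_l\,\partial_j Q_k \propto \partial_j\theta$. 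Since $\partial_j\theta=\varepsilon_{jk1}X^k/\rho^2$, comparison with \eqref{cntab} then gives $i(BQ)^k[Q_k,P_j]=-eA_j$, which is the assertion. Well-definedness of the underlying oscillatory integral is not automatic, because $\mathbf{Q}$ is singular on the $x$-axis; I would obtain it by repeating the estimate of Proposition \ref{wcfh}, the only new point being the finiteness of radial integrals of the form $\int_0^{\infty}\rho^{\,s}e^{-\rho^2}\,d\rho$, which, as in Lemma \ref{cnt0}, is ensured by choosing the dense domain $\mathcal{E}$ to consist of vectors with sufficient vanishing on the $x$-axis.

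For the second assertion the strategy is gauge-theoretic. Using the form of $A_j$ just obtained, the two deformations with fluxes $\phi_1,\phi_2$ produce potentials $A_j^{(i)}=\tfrac{\phi_i}{2\pi}\partial_j\theta$, so their difference is pure gauge, $A_j^{(1)}-A_j^{(2)}=\partial_j\Lambda$ with $\Lambda=\tfrac{\phi_1-\phi_2}{2\pi}\theta$. I would then exhibit the unitary $U=e^{ie\Lambda}$ intertwining the two deformed Hamiltonians and observe that $U$ is a genuine single-valued operator on the Hilbert space precisely when $\theta\mapsto e\Lambda$ is $2\pi$-periodic, i.e. when $e(\phi_1-\phi_2)=2\pi n$, which is exactly condition \eqref{mfdp}. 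Because a single-valued gauge transformation leaves all probability densities, and in particular the two-slit interference pattern, invariant, the two systems are then physically indistinguishable.

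The main obstacle I anticipate is twofold. Computationally, the delicate point in the first part is the bookkeeping of the $\varepsilon$-tensor contractions and metric signs so that the shift lands exactly on $-eA_j$ rather than on a multiple or a sign-flipped version of it; recognising the collapse to $\partial_j\theta$ is what makes this transparent. Conceptually, the harder step is the second part, where one must argue that single-valuedness of $U$ is both necessary and sufficient for the two Hamiltonians to share an interference pattern, and that this topological quantisation condition is precisely \eqref{mfdp}; the whole physical weight of the result rests on that equivalence rather than on any further analytic estimate.
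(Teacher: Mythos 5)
Your proposal is correct, and its computational core coincides with the paper's: the paper likewise inserts the data \eqref{opab} into the formula \eqref{defh} of Proposition \ref{wcfh} and observes that, by the skew-symmetry of $B$ together with the commutator relation \eqref{hs2}, the shift $i(BQ)^k[Q_k,P_j]$ reduces to $(BX)_j/(X_2^2+X_3^2)$, i.e.\ to the Aharonov--Bohm potential \eqref{cntab} times $-e$; your collapse of the antisymmetric contraction to $\partial_j\theta$ is exactly that cancellation phrased geometrically (the $B$-contraction kills the radial part of $[Q_k,P_j]$ and keeps the angular gradient). Where you genuinely depart from the paper is in the two points it leaves untouched. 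First, the paper gives no well-definedness argument for this $\mathbf{Q}$: it invokes Proposition \ref{wcfh}, but that proposition rests on Lemma \ref{cnt0}, which only treats $\mathbf{Q}=\mathbf{X}/\vert\mathbf{X}\vert^{n}$, singular at a single point, whereas the operator in \eqref{opab} is singular along the entire $x$-axis; your plan to rerun the estimate on a dense domain of vectors vanishing sufficiently fast near that axis addresses a gap the paper silently passes over. Second, the paper does not prove the interference-pattern clause at all---Equation \eqref{mfdp} is imported earlier as a known quantum-mechanical fact, and the printed proof stops once the gauge field is identified---while you derive it from the gauge equivalence $A^{(1)}_j-A^{(2)}_j=\partial_j\Lambda$, $\Lambda=\frac{\phi_1-\phi_2}{2\pi}\theta$, and the single-valuedness of $U=e^{ie\Lambda}$ precisely under \eqref{mfdp}. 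Your route is therefore longer but self-contained; the paper's is shorter because it delegates both points to prior material, in one instance material that does not literally cover the case at hand.
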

\begin{proof}  For the deformation of
$H_{0}$ we use Proposition \ref{wcfh}, with $Q_j(\mathbf{X})$ as given in Equation (\ref{opab}),
\begin{align*}\label{defhab}
(H_{0})_{B,\mathbf{Q}}\Psi
&=-\frac{1}{2m}\left
(P_j+i(B  Q )^k [Q_k,P_j]  \right)\left(P^j+i(B  Q )^r [Q_r,P^j]  \right) \Psi
\nonumber \\&
=  -\frac{1}{2m}  \left( 
P_{j}+(BX)_j/ (-\sum_{s=2}^{3}X_sX^s)   \right) \left( 
P^{j}+(BX)^j/ (-\sum_{r=2}^{3}X_rX^r)   \right) 
 \Psi, 
\end{align*}
where in the last lines we used the skew-symmetry of $B$ and commutator relation \ref{hs2}.
 Thus, by setting the deformation matrix $B_{ij}= - ({e\,\phi_{M}}/{2\pi}) \varepsilon_{ijk}e^{k}$,
the gauge field $A_i(\mathbf{x})$   induced by deformation  is  the gauge field of the AB effect for a homogeneous magnetic field in
$x$-direction.
  \end{proof}$\,$\newline
This is an interesting result. We were able to induce the AB-gauge field by deforming the free
Hamiltonian with a vector-valued function of the  coordinate operator.  In this
case the deformation parameter corresponds to the magnetic flux rather, as in the previous cases,
to the magnetic field.
\newline\newline
There are two ways to interpret these results. The first one lies in understanding deformation, in
the case of QM, as the rightful minimal substitution. Thus the  procedure sheds new light on
quantum mechanical effects involving
magnetic fields. The fields can be understood as the outcome of a deformation with vector-valued functions of
the coordinate operator.  The other way of understanding the result is the following. The coupling
of an external magnetic
field in QM is well understood and studied for various physical applications and models.
Deformation on the other hand is a mathematical tool, rather than a procedure that generates
physical effects. Hence, in these examples deformation of  a QM system can be understood as the
coupling of an external field. Thus, if the deformation goes hand in hand with Moyal-type
spaces one sees in these examples that Moyal spaces correspond to ordinary spaces in
the presence of an external field. By having this observation in mind it does not seem far fetched
that certain deformations of spacetime correspond to gravitation. Let us describe in the next sections how Moyal-Weyl spaces arise in this context.

\subsection{Physical Moyal-Weyl plane  }\label{pmwfd}
 To describe the circular motion of an electron in the lowest Landau level  we define
the so called  guiding
center coordinates   $\mathbf{Q}$,
\cite{EZ, SZ}  
\begin{equation*}
 Q_{i}:= X_{i}+\frac{1}{2}(B^{-1})_{ik}P^{k},
\end{equation*}
 with matrix $B_{ij}=-(e/2)\,\varepsilon_{ijk}B^{k}$. Note that the inverse corresponds to the non-degenerate sub-matrix of $B_{ij}$.
By using the CCR it becomes apparent that the guiding center coordinates span a  three
dimensional Moyal-Weyl plane, i.e. 
\begin{equation}\label{mw}
 [Q_{i},Q_{j}]= i(B^{-1})_{ij}.
\end{equation}
  Thus, the Landaueffect is an example of a physical noncommutative space.
Now can we generate these noncommuting coordinates by the deformation procedure warped convolutions?  Yes we can!  
  
\begin{lemma}\label{lmw}
The deformed coordinate operators $ X^j_{\theta,
\mathbf{P}}$  given as  (see Equation \ref{defx})
\begin{equation}
 X^j_{\theta,
\mathbf{P}} =
X^j-\theta^{jr}P_r,
\end{equation}
satisfy the commutation
relations of the Moyal-Weyl plane $\mathbb{R}^3_{-2\theta}$,
\begin{equation}\label{mwx}
 [X^{i}_{\theta,
\mathbf{P}},X^{j}_{\theta,
\mathbf{P}}]=-2i\theta^{ij}.
\end{equation}
Moreover, let $-2\theta^{ij}$ be $(B^{-1})^{ij}$ then the deformed coordinate operators
$X^{i}_{\theta,
\mathbf{P}}$ are equal to the guiding center coordinates given in Equation (\ref{mw}).
\end{lemma}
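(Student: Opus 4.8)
The plan is to establish both assertions by a direct computation of the commutator, using the explicit form of the deformed coordinate operator from Proposition \ref{lfxs} together with the canonical commutation relations (\ref{ccr}) and the skew-symmetry of the deformation matrix $\theta$. First I would take the two deformed operators $X^i_{\theta,\mathbf{P}}=X^i-\theta^{ir}P_r$ and $X^j_{\theta,\mathbf{P}}=X^j-\theta^{js}P_s$ as given in Equation (\ref{defx}), which are well-defined on $\mathscr{S}(\mathbb{R}^3)$ by Proposition \ref{lfxs}, and expand their commutator bilinearly into four terms.

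Then I would evaluate each term separately. The term $[X^i,X^j]$ vanishes by commutativity of the coordinate operator, and the term $[\theta^{ir}P_r,\theta^{js}P_s]$ vanishes since the momentum operators commute along their components. The two surviving cross terms, $-[X^i,\theta^{js}P_s]$ and $-[\theta^{ir}P_r,X^j]$, are computed from the CCR (\ref{ccr}), which after raising an index with $\eta$ reads $[X^i,P_r]=-i\delta^i_r$. Collecting these gives $i\theta^{ji}-i\theta^{ij}$, and invoking the skew-symmetry $\theta^{ji}=-\theta^{ij}$ of the deformation matrix turns this into $-2i\theta^{ij}$, which is precisely the Moyal-Weyl relation (\ref{mwx}).

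For the second assertion I would simply substitute the identification $-2\theta^{ij}=(B^{-1})^{ij}$, equivalently $\theta^{ir}=-\frac{1}{2}(B^{-1})^{ir}$, into the explicit expression for $X^i_{\theta,\mathbf{P}}$. This converts the shift $-\theta^{ir}P_r$ into $+\frac{1}{2}(B^{-1})^{ir}P_r$, so that $X^i_{\theta,\mathbf{P}}=X^i+\frac{1}{2}(B^{-1})^{ir}P_r$, which after lowering indices coincides term by term with the guiding center coordinate $Q_i=X_i+\frac{1}{2}(B^{-1})_{ik}P^k$ of Equation (\ref{mw}). As a consistency check, the commutator computed above then reads $[X^i_{\theta,\mathbf{P}},X^j_{\theta,\mathbf{P}}]=i(B^{-1})^{ij}$, matching the Moyal-Weyl relation (\ref{mw}) satisfied by the guiding center coordinates.

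Since everything reduces to the bilinearity of the commutator and the CCR, there is no genuine analytic obstacle here; the only point requiring care is the consistent raising and lowering of indices with $\eta$ and the correct application of the skew-symmetry of $\theta$, so that the two cross terms add rather than cancel. I expect this index bookkeeping to be the sole source of potential error.
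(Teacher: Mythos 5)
Your proposal is correct and follows essentially the same route as the paper: a direct bilinear expansion of the commutator, dropping the $[X,X]$ and $[P,P]$ terms, evaluating the cross terms via the CCR, and using the skew-symmetry of $\theta$ to obtain $-2i\theta^{ij}$. Your explicit substitution $\theta^{ir}=-\tfrac{1}{2}(B^{-1})^{ir}$ for the identification with the guiding center coordinates is a slightly more detailed treatment of the second claim than the paper gives, but it is the same trivial step.
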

\begin{proof} 
 
The commutator of the deformed coordinate operator is calculated by using the
canonical commutation relations  and the skew-symmetry of the deformation
matrix
$\theta_{jk}$.
\begin{align*}
 [X^j_{\theta,
\mathbf{P}},X^k_{\theta,
\mathbf{P}}]&=
 [X^j-\theta^{jr}P_r,X^k-\theta^{kl}P_l]\\&=
-\theta^{kl}[X^j,P_l]+\theta^{jl}[X^k,P_l]\\&
=-2i\theta^{jk}.
\end{align*} 
\end{proof}$\,$\newline
Lemma \ref{lmw} gives a well defined path to obtain an effective quantum plane by the
deformation using warped convolutions. As we showed, the lemma follows from well
understood physical models and ideas, which are in circulation in condensed matter field
theory, for quite some time. In the example of the Landau problem one defines guiding center
coordinates,
which satisfy the commutator relations of the Moyal-Weyl Plane. The reader is cautioned
to notice that the effective quantum plane obtained by the Landau problem is not merely
an abstract construct but has the precise meaning, that the space coordinates can not be
measured simultaneously. A more precise mathematical way to obtain this Moyal-Weyl plane
is introduced in this work. We obtain the Landau problem by deforming the Hamiltonian of
a free non-relativistic particle with the coordinate operator and
by setting the deformation parameter equal to a magnetic field. Furthermore, we show that the noncommuting
coordinates referred to as the guiding coordinates are obtained by deforming the
coordinate operator, using the momentum operator. In our opinion, this method can be
further used in the quantum field theoretical (QFT) approach to define an effective quantum
plane.  
\subsection{Gravitomagnetism in QM}
The emergence of gravitomagnetism in QM   from deformation theory is one of the centerpieces of this work. Before we prove the emergence of these effects let us introduce some
 basic notations. We consider  slowly varying weak gravitational fields with   energy momentum
tensor of ordinary matter (dust-like). In this description the
metric can be written as
\begin{equation*}
 g_{\mu\nu}=\eta_{\mu\nu}+h_{\mu\nu},
\end{equation*}
where $h$ is the small perturbation from the flat spacetime and the
energy momentum tensor can be written as 
\begin{equation*}
 T_{\mu\nu}=\rho u_{\mu }u_{ \nu},
\end{equation*}
where $u$ is the $4$-velocity and $\rho$ the scalar density.  For slowly varying fields, the linearized Einstein field equations
   can be
described by Maxwell-like field equations given by
\begin{equation*}
 \Delta \phi=4\pi G \rho, \qquad  \Delta h^j=-16\pi G \rho  v^j, \qquad
4\dot{\phi}-\nabla\cdot  h^j=0,
\end{equation*}
with the definitions of the potentials
\begin{equation*}
\phi:=h_{00}/2,\qquad h^{j}:=h^{0j},
\end{equation*}
where we  used the Lorentz condition and the fact that the fields considered are
slowly varying, i.e. $\ddot{\phi}$, $ \dot{h}_k$ and $\ddot{h}_k $  can be neglected, (see for
example \cite{AP}, 
\cite{WS}). Analogously to the electromagnetic case the gravitoelectric field $\mathbf{g}$ and the
gravitomagnetic  field $\mathbf{\Omega}$ are both defined by the potentials as

\begin{equation*}
\mathbf{g}=-\boldsymbol{\nabla} \phi, \qquad  \mathbf{\Omega}=\boldsymbol{\nabla}  \times \mathbf{h}.
\end{equation*}
There are a few important examples that can be considered in the context  gravitomagnetism. One of them is 
example of  the vector potential $\mathbf{h}$ inside a hollow spinning sphere  with radius $r_{hs}$ and spin $\boldsymbol{\omega}$ that is
given by, \cite[Eq. 9.4.35]{WS}
\begin{equation}\label{gmp}
\mathbf{h}(\mathbf{x})= \mathbf{x}\times \mathbf{\Omega},
\end{equation}  
where ${\Omega}= {2MG}/{r_{hs}}$ is the  constant gravitomagnetic field  inside the hollow  sphere.\newline\newline
Now in \cite{AP} the authors derived the non-relativistic Schr\"odinger equation  for a
particle that is minimally coupled to an external electromagnetic and gravitoelectromagnetic
field. The equation is given by \cite[Eq. 5.1]{AP},

\begin{equation}\label{cnt5}
H_{GEM}{\Psi}=-\frac{1}{2m}\left(P-eA\right)_{i}\left(P-eA\right)^{i}{\Psi}-
h^{i}\left(P-eA\right)_{i}{\Psi},
\end{equation}
where we set the potential $V,\phi$ equal to zero and neglected the term of second derivative in
$\Psi$. This can be done since the term is just a relativistic correction which for slowly
moving bodies can be neglected.  \newline\newline
By using the former definitions  and results we are  able to
reproduce the case of a constant gravitomagnetic field by deformation.
 
\begin{lemma}\label{llq1}
Let the deformation  matrix $B_{ij}$   be given as 
\begin{equation}
 B_{ij}=m\,\varepsilon_{ijk}\,\Omega^{k} ,
\end{equation}
where  ${\Omega}^{k}=(2GM/r_{hs})\, {\omega}^{k}$ is a constant gravitomagnetic field for a hollow
spinning
sphere. Then, the deformed free Hamiltonian $(H_{0})_{B,\mathbf{X}}$,  becomes the Hamiltonian
 of a quantum mechanical particle minimally coupled to a constant gravitomagnetic field, i.e.
\begin{align*} 
(H_{0})_{B,\mathbf{X}}\Psi(\mathbf{x})&
=-\frac{1}{2m}
(P_j+m\,h_{j})(P^j+m\,h^{j})\Psi\\&= H_{0}\Psi(\mathbf{x})-h^{j}P_{j}\Psi(\mathbf{x})+\mathcal{O}(h^2),\qquad \Psi\in\mathcal{E},
\end{align*}
where the vector  $h_{j}=  \varepsilon_{jkl} \,x^{k}\,{\Omega}^{l}$ represents the gravitomagnetic vector potential for a hollow
spinning sphere (see Equation (\ref{gmp})).
\end{lemma}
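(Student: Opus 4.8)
The plan is to follow the same route as the proofs of Lemma \ref{llq} and Lemma \ref{cnt2}, because the deformation is again carried out with the bare coordinate operator $\mathbf{Q}=\mathbf{X}$ (the $n=0$ case of Lemma \ref{cnt0}), so the delicate domain analysis and well-definedness are already covered by Proposition \ref{wcfh}. First I would quote Equation (\ref{defh}) and use the CCR (\ref{ccr}) to evaluate $i(BX)^k[X_k,P_j]=B_{jk}X^k$, reducing the deformed free Hamiltonian to the minimally-shifted form recorded in Equation (\ref{defhx}),
\[
(H_{0})_{B,\mathbf{X}}\Psi=-\frac{1}{2m}\bigl(P_j+B_{jk}X^k\bigr)\bigl(P^j+B^{jr}X_r\bigr)\Psi .
\]

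Next I would insert the specific matrix $B_{ij}=m\,\varepsilon_{ijk}\Omega^{k}$ and identify the induced shift with the gravitomagnetic potential: $B_{jk}X^{k}=m\,\varepsilon_{jkl}X^{k}\Omega^{l}=m\,h_j$, where $h_j=\varepsilon_{jkl}X^{k}\Omega^{l}$ is exactly the potential $\mathbf{h}=\mathbf{x}\times\mathbf{\Omega}$ of the hollow spinning sphere from Equation (\ref{gmp}). This already yields the first claimed equality $(H_{0})_{B,\mathbf{X}}\Psi=-\tfrac{1}{2m}(P_j+m h_j)(P^j+m h^j)\Psi$, in complete parallel with the way the matrix $B_{ij}=-(e/2)\varepsilon_{ijk}B^{k}$ produced the electromagnetic vector potential in the Landau and Zeeman cases; the only change is that the physical constant carried by $B$ is now the gravitomagnetic field $\Omega^k=(2GM/r_{hs})\omega^k$.

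The remaining work is the expansion to linear order in $h$. Multiplying out the product gives $H_0-\tfrac{1}{2}(P_jh^j+h_jP^j)-\tfrac{m}{2}h_jh^j$. The quadratic term $-\tfrac{m}{2}h_jh^j$ is the announced $\mathcal{O}(h^{2})$ remainder, dropped on the same physical grounds (a neglected relativistic correction) used in passing to Equation (\ref{cnt5}) of \cite{AP}. For the symmetrized cross term I would establish the transversality identity $\sum_j[P_j,h^j]=0$, which holds because $\varepsilon^{jkl}\eta_{jk}=0$ (equivalently $\boldsymbol{\nabla}\cdot\mathbf{h}=0$ for a constant $\mathbf{\Omega}$); this collapses $P_jh^j+h_jP^j$ to $2h^jP_j$, so the cross term becomes $-h^jP_j$ and the second claimed equality follows,
\[
(H_{0})_{B,\mathbf{X}}\Psi=H_{0}\Psi-h^{j}P_{j}\Psi+\mathcal{O}(h^{2}),
\]
which is precisely Equation (\ref{cnt5}) with the electromagnetic coupling $eA$ switched off.

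The main obstacle is this ordering/transversality step: I expect everything else to be a near-verbatim repeat of the substitution used for the Landau and Zeeman effects, whereas the identity $\sum_j[P_j,h^j]=0$ is what guarantees that the deformation lands exactly on the minimally-coupled gravitomagnetic Hamiltonian rather than on an operator carrying a residual symmetrization ambiguity. I would therefore treat its verification, together with the index-raising bookkeeping $h_jP^j=h^jP_j$ that accompanies it, as the one genuinely checkable point of the argument.
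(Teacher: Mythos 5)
Your proposal is correct and follows essentially the same route as the paper's proof: quote the deformed Hamiltonian in the form of Equation (\ref{defhx}), substitute $B_{ij}=m\,\varepsilon_{ijk}\Omega^{k}$ so that $B_{jk}X^{k}=m\,h_{j}$ with $h_{j}=\varepsilon_{jkl}X^{k}\Omega^{l}$ the potential of Equation (\ref{gmp}), and expand, discarding the $\mathcal{O}(h^{2})$ term in the linear approximation. The only difference is that you explicitly verify the transversality identity $\sum_j[P_j,h^j]=0$ (from $\varepsilon^{jkl}\eta_{jk}=0$) that justifies writing the symmetrized cross term as $-h^{j}P_{j}$, a step the paper uses silently.
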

\begin{proof}
The free deformed Hamiltonian $(H_{0})_{B,\mathbf{X}}$  is given by
\begin{align*} 
(H_{0})_{B,\mathbf{X}}\Psi&=
-\frac{1}{2m}
(P_j+B_{jk}X^k)(P^j+B^{jr}X_r)\Psi\\&
=-\frac{1}{2m}
(P_j+m\,h_{j})(P^j+m\,h^{j})\Psi\\&= H_{0}\Psi(\mathbf{x})-h^{j}P_{j}\Psi(\mathbf{x})+\mathcal{O}(h^2),\qquad \Psi\in\mathcal{E},
\end{align*} where in the last line we set the deformation  matrix
$B_{ij}=  m\,\varepsilon_{ijk}(2GM/r_{hs})\,{\omega}^{k}$ and we neglected second order terms  since we work in the
linear approximation. 
\end{proof}$\,$\newline
This an important result, since this means that we obtain gravitational effects from a
well-defined deformation procedure by simply adjusting the deformation constants accordingly. Thus,  gravitomagnetism can be understood as the outcome of a deformation
procedure. Moreover, the physical constant  used as deformation parameter in the gravitomagnetic case is the gravitational constant $G$. Since by setting the gravitational constant to zero, i.e. neglecting gravitational effects, the deformed Hamiltonian describing gravitomagnetic effects becomes the free Hamiltonian.   \newline\newline
Next we use Proposition \ref{wc2} of the deformation technique to obtain the electromagnetic
and gravitomagnetic coupling.

\begin{proposition}
Let the deformation  matrix $B^{1}_{ij}$   be given as 
\begin{equation*}
 B^{1}_{ij}=m\,\varepsilon_{ijk}\,\Omega^{k},
\end{equation*}
where  $\Omega^{k}=(2GM/r_{hs}) {\omega}^{k}$ is the constant gravitomagnetic field for a hollow
spinning
sphere and let the deformation  matrix $B^{2}_{ij}$  be given as 
\begin{equation*}
 B^{2}_{ij}=-(e/2)\,\varepsilon_{ijk}B^{k},
\end{equation*}
where $B^k$ is a homogeneous magnetic field.\newline\newline
Then, the deformed free Hamiltonian $\left((H_0)_{B^{1},\mathbf{X}}\right)_{B^{2},\mathbf{X}}$ becomes the Hamiltonian
$H_{GEM}$ (see Equation \ref{cnt5}) of a quantum mechanical particle minimally coupled to  
a constant external magnetic and gravitomagnetic 
field, i.e.
\begin{equation*}
(H_0)_{B^{1}+B^{2},\mathbf{X}}\Psi=H_{GEM}\Psi,\qquad \Psi\in\mathcal{E.} 
\end{equation*}

\end{proposition}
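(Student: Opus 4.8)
The plan is to collapse the iterated deformation into a single one and then reuse the explicit formula already obtained for the free Hamiltonian. First I would invoke Proposition \ref{wc2}, the additivity law for successive warped convolutions. The decisive observation is that both deformation steps are carried out with the \emph{same} operator-valued vector $\mathbf{Q}=\mathbf{X}$, so they are implemented by one and the same unitary group $V(k)=e^{ik_jX^j}$; the composition law therefore applies to the (unbounded) operator $H_0$, whose deformation has already been established as a well-defined oscillatory integral in Proposition \ref{wcfh}. This yields
\[ \left((H_0)_{B^1,\mathbf{X}}\right)_{B^2,\mathbf{X}}\Psi=(H_0)_{B^1+B^2,\mathbf{X}}\Psi,\qquad\Psi\in\mathcal{E}, \]
reducing everything to a single deformation with matrix $B=B^1+B^2$.

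Next I would substitute $B=B^1+B^2$ into the explicit result of Equation (\ref{defhx}), giving
\[ (H_0)_{B^1+B^2,\mathbf{X}}\Psi=-\frac{1}{2m}\bigl(P_j+(B^1+B^2)_{jk}X^k\bigr)\bigl(P^j+(B^1+B^2)^{jr}X_r\bigr)\Psi. \]
The two induced potentials are then read off from the prescribed matrices. With $B^1_{ij}=m\,\varepsilon_{ijk}\Omega^k$ one obtains $(B^1)_{jk}X^k=m\,h_j$, the gravitomagnetic vector potential $h_j=\varepsilon_{jkl}X^k\Omega^l$ of Lemma \ref{llq1}, while $B^2_{ij}=-(e/2)\varepsilon_{ijk}B^k$ reproduces the magnetic gauge field $A_j$ of Equation (\ref{gf1}), so that $(B^2)_{jk}X^k=-e\,A_j$ once the orientation of $B^k$ is fixed to match the sign convention of Equation (\ref{cnt5}). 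Abbreviating $\pi_j:=P_j-eA_j$, the deformed Hamiltonian takes the compact form $-\frac{1}{2m}(\pi_j+m\,h_j)(\pi^j+m\,h^j)$.

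The substantive step is the expansion of this quadratic form, which I would write as
\[ -\frac{1}{2m}\pi_j\pi^j-\frac{1}{2}\bigl(h_j\pi^j+\pi_j h^j\bigr)-\frac{m}{2}\,h_j h^j. \]
Here $h_j$ is a function of the coordinate operator alone, hence commutes with $A^j$, while the commutator $[h_j,P^j]=\varepsilon_{jkl}\Omega^l[X^k,P^j]$ vanishes by the antisymmetry of $\varepsilon$ contracted against the symmetric $\eta$ produced by the CCR; thus $[h_j,\pi^j]=0$. Consequently the two mixed terms coincide and combine to $-h^j(P-eA)_j$. Discarding the purely gravitomagnetic term $-\frac{m}{2}h_jh^j$, which is of order $\Omega^2\sim G^2$ and therefore negligible in the same linearized regime already adopted in Lemma \ref{llq1}, reproduces exactly
\[ -\frac{1}{2m}(P-eA)_j(P-eA)^j-h^j(P-eA)_j=H_{GEM}, \]
the Hamiltonian of Equation (\ref{cnt5}).

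The main obstacle I anticipate is not the additivity but the bookkeeping of this final expansion. One must justify the symmetric splitting of the mixed term $h_j\pi^j+\pi_j h^j$ into $2h^j\pi_j$, which rests on the operator-ordering identity $[h_j,\pi^j]=0$ (the gravitomagnetic analogue of the Coulomb-gauge condition), and one must keep two distinct orders apart: the cross term $e\,h^jA_j$, of order $\Omega\cdot B$, is retained because it is precisely the term already present in $H_{GEM}$, whereas only the $\Omega^2$ contribution is dropped. Maintaining this separation while applying the linear approximation consistently is where the argument demands the most care.
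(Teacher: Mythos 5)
Your proposal is correct and takes essentially the same route as the paper's proof: invoke Proposition \ref{wc2} to collapse the iterated deformation into the single deformation $(H_0)_{B^{1}+B^{2},\mathbf{X}}$, substitute into the explicit formula (\ref{defhx}), identify $(B^{1}X)_j=m\,h_j$ and $(B^{2}X)_j=-e\,A_j$, and expand while discarding the $\mathcal{O}(h^2)$ term in the linearized regime. Your additional verification of the operator-ordering identity $[h_j,(P-eA)^j]=0$ makes explicit a step the paper passes over silently, but it does not constitute a different argument.
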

 \begin{proof}
First of all by the virtue of Proposition \ref{wc2} the deformed Hamiltonian satisfies 
\begin{equation*} 
\left((H_0)_{B^{1},\mathbf{X}}\right)_{B^{2},\mathbf{X}}\Psi =(H_0)_{B^{1}+B^{2},\mathbf{X}}\Psi.
\end{equation*}
 Next we consider the free deformed Hamiltonian $(H_0)_{B^{1}+B^{2},\mathbf{X}}$, (see Equation
(\ref{defhx})).
\begin{align*} 
(H_0)_{B^{1}+B^{2},\mathbf{X}}\Psi&=
-\frac{1}{2m}
\left(P+\left((B^{1}+B^{2})X\right)\right)_j\left(P+\left((B^{1}+B^{2})X\right)\right)^j \Psi\\&=
-\frac{1}{2m}
(P_j-eA_{j}+
mh_{j}) (P^j-eA^{j}+
mh^{j}) \Psi
\\&=-\frac{1}{2m}\left(P-eA\right)_{j}\left(P-eA\right)^{j}{\Psi}-h_{j}\left(P-eA 
\right)^{j}\Psi+\mathcal{O}(h^2),
\end{align*}
where we set the deformation  matrix
$B^1_{ij}= m\varepsilon_{ijk}(2GM/r_{hs}) {\omega}^{k}$ and
$B^{2}_{ij}= -(e/2)\varepsilon_{ijk}B^{k}$ and $h_{j}$ is the gravitomagnetic vector potential given in Equation (\ref{gmp}) for a hollow
spinning sphere  and $-A^{j}$  the  magnetic vector potential given in the Landau quantization,
(see Equation (\ref{gf1})).
\end{proof}$\,$\newline
From this result it becomes clear that in a quantum mechanical setting one can obtain electromagnetic
and gravitomagnetic effects by a deformation procedure. In the framework of deformation these
effects simply correspond to certain deformation parameters that in turn are given by physical
constants. One should also note that we obtained the Hamiltonian of a quantum mechanical system
that is coupled to an  external magnetic and gravitomagnetic field by 
deformation, rather than by advanced calculations and considerations as done in \cite{AP}.
\subsection{Lense-Thirring Precession}
Another important gravitomagnetic effect is known under the name of Lense–Thirring precession.
The effect is a general-relativistic correction to the precession of a gyroscope outside a massive
stationary spinning sphere. The vector potential for such gravitomagnetic field is given as  
\begin{equation}\label{vplt}
 \mathbf{h}=-(2GI/r^3)\mathbf{x}\times\boldsymbol{\omega},
\end{equation}
where $I$ is the moment of inertia of the sphere and $r=\vert\mathbf{x} \vert$ the radius. 
\newline\newline
As for the constant gravitomagnetic field, we are also able to produce the vector potential of the
Lense–Thirring effect.  
  
\begin{proposition}
Let the deformation matrix $B_{ij}$  and the operator $Q_j$ be given as
\begin{equation}\label{lteo}
 B_{ij}=m\,\varepsilon_{ijk}\,\Omega^{k}, \qquad Q_j(\mathbf{X})={X_j}/{\vert\mathbf{X}\vert^{3/2}}.
\end{equation}
where $\Omega^{k}= (2GI)\, {\omega}^{k}$ and $I$ is the moment of inertia of a spinning sphere.  
Then, the deformed free Hamiltonian $(H_{0})_{B,\mathbf{Q}}$,   becomes the Hamiltonian of a quantum mechanical particle minimally coupled to the  gravitomagnetic field of the
Lense-Thirring effect, i.e.
\begin{align*} 
(H_{0})_{B,\mathbf{Q}}\Psi &=-\frac{1}{2m}  \left( 
P_{j}+(BX)_j/ {\vert\mathbf{X}\vert^{3}}  \right)  \left( 
P^{j}+(BX)^j/ {\vert\mathbf{X}\vert^{3}}  \right)  \Psi ,\qquad \Psi\in\mathcal{E } \\&
=H_{0}\Psi-h_jP^j\Psi+\mathcal{O}(h^2),
\end{align*}
 where the vector potential induced   by deformation is the gauge field of the Lense-Thirring effect, i.e. $h_j=m\,\varepsilon_{jkl}(2GI)\, {\omega}^{l}X^k/ {\vert\mathbf{X}\vert^{3}}$.

\end{proposition}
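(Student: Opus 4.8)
The plan is to read off the result directly from Proposition \ref{wcfh}: since $Q_j(\mathbf{X}) = X_j/\vert\mathbf{X}\vert^{3/2}$ is precisely an operator of the admissible type $\mathbf{X}/\vert\mathbf{X}\vert^{n}$ with $n = 3/2$, Lemma \ref{cnt0} guarantees the existence of the dense domain $\mathcal{E}$ on which the deformation is well-defined, and Equation (\ref{defh}) supplies the explicit deformed Hamiltonian. Thus the entire content of the proposition reduces to evaluating the momentum shift $i(BQ)^k[Q_k,P_j]$ for this particular $Q$ and matching it to the gravitomagnetic potential of Equation (\ref{vplt}).

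First I would insert the commutator (\ref{hs2}) with $n=3/2$ together with $(BQ)^k = B^{ks}X_s/\vert\mathbf{X}\vert^{3/2}$. The two factors of $\vert\mathbf{X}\vert^{-3/2}$ combine into $\vert\mathbf{X}\vert^{-3}$, which is exactly the radial profile needed for the Lense-Thirring field; the remaining term proportional to $X_jX_k/\vert\mathbf{X}\vert^2$ carries a contraction $B^{ks}X_sX_k$ that vanishes by skew-symmetry of $B$. This leaves the clean shift $i(BQ)^k[Q_k,P_j] = (BX)_j/\vert\mathbf{X}\vert^3$. Substituting $B_{ij} = m\,\varepsilon_{ijk}\Omega^k$ with $\Omega^k = (2GI)\omega^k$ then identifies $(BX)_j/\vert\mathbf{X}\vert^3$ with $m$ times the gravitomagnetic vector potential of a spinning sphere, i.e. the gauge field of Equation (\ref{vplt}).

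To finish, I would expand $-\frac{1}{2m}(P_j + mh_j)(P^j + mh^j)$ exactly as in the constant gravitomagnetic case of Lemma \ref{llq1}, obtaining $H_0$, the cross term $-\frac{1}{2}(P_j h^j + h_j P^j)$, and a quadratic piece $-\frac{m}{2}h_jh^j = \mathcal{O}(h^2)$. The one step requiring care is the ordering in the cross term: I would check that $\mathbf{h}$ is transverse, $\partial^j h_j \propto \varepsilon_{jkl}\omega^l(\eta^{jk}/\vert\mathbf{X}\vert^3 - 3X^jX^k/\vert\mathbf{X}\vert^5) = 0$, both contractions vanishing by antisymmetry of $\varepsilon$ against the symmetric tensors $\eta^{jk}$ and $X^jX^k$. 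Hence $[P_j,h^j]=0$, the two cross terms coincide and combine to $-h_jP^j$, and dropping $\mathcal{O}(h^2)$ gives $(H_0)_{B,\mathbf{Q}}\Psi = H_0\Psi - h_jP^j\Psi$ as claimed. The genuine subtlety of the whole argument is therefore the interplay between the fractional power $3/2$ and the skew-symmetry cancellation, which together fix the correct $1/r^3$ fall-off; everything else follows from the already established machinery.
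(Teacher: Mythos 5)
Your proposal is correct and takes essentially the same route as the paper's proof: apply the already-established deformation formula of Proposition \ref{wcfh}, evaluate the momentum shift via the commutator relation (\ref{hs2}) together with the skew-symmetry of $B$ (which kills the $X_kX_j$ contraction and yields the $1/\vert\mathbf{X}\vert^{3}$ profile), identify $(BX)_j/\vert\mathbf{X}\vert^{3}$ with $m$ times the Lense--Thirring potential, and expand to linear order in $h$. Your explicit transversality check $\partial^j h_j=0$, which justifies the operator ordering $-h_jP^j$ in the cross term, is a detail the paper leaves implicit, but it refines rather than alters the same argument.
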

 
\begin{proof} 
To prove this proposition we use the spectral measure representation.  The deformation of
$H_{0}$ is then given as follows,
\begin{align*}\label{defhab4}
(H_{0})_{B,\mathbf{Q}} \Psi \nonumber &
= -\frac{1}{2m}  ( 
P_{j}+(BQ )_s  [Q^s, P_j]   )  (
P^{j}+(BQ )^r [Q_r, P^j]  )\Psi\nonumber \\
&=-\frac{1}{2m}  \left( 
P_{j}+(BX)_j/ {\vert\mathbf{X}\vert^{3}}  \right)  \left( 
P^{j}+(BX)^j/ {\vert\mathbf{X}\vert^{3}}  \right) \Psi\\& 
=-\frac{1}{2m}\biggl(P_j+m\, h_j\biggr) \biggl(P^j+m \,h^j\biggr)  \Psi
\\&
=H_{0}\Psi-h_jP^j\Psi+\mathcal{O}(h^2),
\end{align*}
where in the last lines we used the skew-symmetry of $B$ and the commutator relation $\ref{hs2}$.
  \end{proof}$\,$\newline
Next, we use   the deformation technique to obtain the electromagnetic
and gravitomagnetic coupling in the case of the Lense-Thirring effect. The effects emerge by a
double deformation where once we use the coordinate operator and after that the operator-valued
vector $Q_j(\textbf{X})$. Note that the order of the deformation is irrelevant, since the two operators commute.

\begin{remark}
The proof that the deformation with two different operators is well-defined, is equivalent to proving Proposition \ref{wcfh}, where one replaces the free Hamiltonian in Inequality (\ref{inecl4}) with  $(H_0)_{{B^{2} ,\mathbf{X}}}$. It then follows that for $  \Phi \in \mathscr{S}(\mathbb{R}^3) $, the expression $\Vert (H_0)_{{B^{2} ,\mathbf{X}}} \Phi \Vert $ is finite.
 \end{remark}

\begin{proposition}\label{cnt3}

Let the deformation  matrix $B^{1}_{ij}$  be given as 
\begin{equation}\label{dpg1}
 B^{1}_{ij}=m\,\varepsilon_{ijk}\Omega^{k},
\end{equation}
where $\Omega^{k}= (2GI)\, {\omega}^{k}$  and let the deformation  matrix $B^{2}_{ij}$  be given as 
\begin{equation}\label{dpg2}
 B^{2}_{ij}= -(e/2)\,\varepsilon_{ijk}B^{k},
\end{equation}
where $B^k$ is a homogeneous magnetic field. Moreover, let the operator $Q_j(\textbf{X})$ be given by
\begin{equation*} 
Q_j(\mathbf{X})={X_j}/{\vert\mathbf{X}\vert^{3/2}}.
\end{equation*}
Then, the deformed free Hamiltonian $((H_{0})_{B^{2} ,\mathbf{X}})_{B^{1}, \mathbf{Q}}$ becomes the
Hamiltonian
$H_{GEM}$ of a quantum mechanical particle minimally coupled to  
a constant external magnetic and the gravitomagnetic 
field of the Lense-Thirring effect, i.e.
\begin{equation*}
((H_0)_{{B^{2} ,\mathbf{X}}})_{B^{1}, \mathbf{Q} }\Psi=H_{GEM}\Psi,\qquad \Psi\in\mathcal{E.} 
\end{equation*}

\end{proposition}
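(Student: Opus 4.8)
The plan is to carry out the two warped convolutions in sequence and then combine the two minimal substitutions that result. A preliminary point worth stressing: although the statement involves a double deformation, Proposition \ref{wc2} is \emph{not} available here, because the two deformations are implemented by \emph{different} unitary groups --- the inner one by $U(y)=e^{iy_kX^k}$ (deformation with the coordinate operator $\mathbf{X}$) and the outer one by $V(y)=e^{iy_kQ^k}$ with $Q_j=X_j/|\mathbf{X}|^{3/2}$. Hence I would compute the outer deformation directly on the already-deformed operator rather than simply adding the matrices $B^1+B^2$. Well-definedness as an oscillatory integral is supplied by the preceding Remark: one repeats the estimate leading to Inequality (\ref{inecl4}) with $H_0$ replaced by $(H_0)_{B^2,\mathbf{X}}$, using that $\Vert (H_0)_{B^2,\mathbf{X}}\Phi\Vert<\infty$ for $\Phi\in\mathscr{S}(\mathbb{R}^3)$.

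First I would perform the inner deformation with $B^2$ and $\mathbf{X}$. By Equation (\ref{defhx}), inserting $B^2_{ij}=-(e/2)\varepsilon_{ijk}B^k$, this produces the magnetic minimal coupling
\begin{equation*}
(H_0)_{B^2,\mathbf{X}}\Psi=-\frac{1}{2m}\,(P_j-eA_j)(P^j-eA^j)\,\Psi,
\end{equation*}
with $A_j$ the Landau gauge field of Equation (\ref{gf1}). Next I would deform this operator with $B^1$ and $\mathbf{Q}$ in the spectral-measure representation $\int dE(y)\,\alpha_{B^1y}\big((H_0)_{B^2,\mathbf{X}}\big)$. The step that carries the physics is the observation that $V(B^1y)=e^{i(B^1y)_kQ^k}$ is a function of the coordinate operator alone, and therefore \emph{commutes with} the magnetic potential $A_j(\mathbf{X})$; consequently the adjoint action leaves the already-present $A$-field untouched and acts only on the momenta, sending $P_j\mapsto P_j+i(B^1y)^k[Q_k,P_j]$ exactly as in Equation (\ref{aaxp}). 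Integrating against $dE(y)$ promotes $(B^1y)$ to $(B^1Q)$, and evaluating the commutator through Equation (\ref{hs2}) with $n=3/2$ --- the piece proportional to $X_kX_l$ dropping out by skew-symmetry of $B^1$ --- yields $i(B^1Q)^k[Q_k,P_j]=(B^1X)_j/|\mathbf{X}|^3=m\,h_j$, with $h_j$ the Lense--Thirring potential of Equation (\ref{vplt}).

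This gives the combined substitution
\begin{equation*}
\big((H_0)_{B^2,\mathbf{X}}\big)_{B^1,\mathbf{Q}}\Psi=-\frac{1}{2m}\,(P_j-eA_j+m\,h_j)(P^j-eA^j+m\,h^j)\,\Psi .
\end{equation*}
Finally I would expand the product and discard the $\mathcal{O}(h^2)$ term --- legitimate in the linearized, slowly-moving regime precisely as in \cite{AP}. The cross term symmetrizes as $-\tfrac12\big(h_j(P-eA)^j+(P-eA)_j h^j\big)=-h_j(P-eA)^j$, using that $h$ and $A$ are mutually commuting functions of $\mathbf{X}$ together with the transverse-gauge condition $\boldsymbol{\nabla}\!\cdot\!\mathbf{h}=0$, so that $[P_j,h^j]=-i\,\partial_j h^j=0$. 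One then recovers
\begin{equation*}
\big((H_0)_{B^2,\mathbf{X}}\big)_{B^1,\mathbf{Q}}\Psi=-\frac{1}{2m}(P-eA)_j(P-eA)^j\Psi-h_j(P-eA)^j\Psi+\mathcal{O}(h^2)=H_{GEM}\Psi,
\end{equation*}
which is exactly Equation (\ref{cnt5}). The main obstacle is conceptual rather than computational: one must recognize that Proposition \ref{wc2} cannot be invoked and instead verify that the outer $\mathbf{Q}$-deformation passes through the $A$-field untouched, precisely because both are functions of the mutually commuting coordinate operators --- this is what makes the magnetic and gravitomagnetic potentials add rather than interfere. The order-irrelevance asserted in the surrounding text is then immediate, since $U$ and $V$ commute as functions of $\mathbf{X}$.
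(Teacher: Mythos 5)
Your proposal is correct and takes essentially the same route as the paper: both proofs rest on the single key observation that the $\mathbf{Q}$-deformation is implemented by functions of the commuting coordinate operators and therefore passes through the already-induced magnetic gauge field, so the two minimal substitutions simply stack to give $-\frac{1}{2m}\left(P-eA+m\,h\right)_j\left(P-eA+m\,h\right)^j\Psi=H_{GEM}\Psi+\mathcal{O}(h^2)$. You are in fact slightly more careful than the paper at two points it glosses over --- noting that Proposition \ref{wc2} cannot be invoked because the two deformations use different unitary groups, and justifying the cross-term identity $-\tfrac12\left(h_j(P-eA)^j+(P-eA)_jh^j\right)=-h_j(P-eA)^j$ via $\boldsymbol{\nabla}\cdot\mathbf{h}=0$ --- but the argument is the same.
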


\begin{proof}
 The free deformed Hamiltonian $(H_0)_{{B^{2} ,\mathbf{X}}}$ is given by, (see Equation
(\ref{defhx})).
\begin{equation*} 
(H_0)_{{B^{2} ,\mathbf{X}}}\Psi=
-\frac{1}{2m}
\left(P+\left(  B^{2} X\right)\right)_j\left(P+\left( B^{2} X\right)\right)^j\Psi .
\end{equation*}
Due to the commutativity of the coordinate operators, deformations with  $Q_j(
\mathbf{X})$ do not influence the gauge field  $(B^{2}  X )$ and vice versa, i.e.  $((H_{0})_{B^{2}
,\mathbf{X}})_{B^{1}, \mathbf{Q}}=((H_{0})_{B^{1}, \mathbf{Q}})_{B^{2} ,\mathbf{X} }$. Thus,  after choosing the deformation parameters as stated in Equations (\ref{dpg1}) and  (\ref{dpg2}) we
obtain for the deformed free Hamiltonian,
\begin{align*}
 ((H_{0})_{B^{1}, \mathbf{Q}})_{B^{2} ,\mathbf{X}}\Psi& 
=-\frac{1}{2m}
\left(P   -eA  + m \,h   \right)_j\left(P   -eA  + m\, h   \right)^j\Psi
\\&=-\frac{1}{2m}\left(P-eA\right)_{j}\left(P-eA\right)^{j}{\Psi}-h_{j}\left(P-eA 
\right)^{j}\Psi+\mathcal{O}(h^2),
\end{align*}
  where $h_{j}$ is the gravitomagnetic gauge field of the Lense-Thirring effect (see Equation
(\ref{vplt})) and $-A^{j}$ the magnetic vector potential given in the Landau quantization,
(see Equation (\ref{gf1})).
\end{proof}
 
 \subsection{Gravitomagnetic Zeemaneffect} 
Similar  to the magnetic case, where  the Zeemaneffect emerged by deforming the hydrogen Hamiltonian with the same deformation matrix used in the Landau quantization, we precede in the gravitomagnetic case. Thus, we are able to predict a gravitomagnetic Zeemaneffect by deforming the hydrogen atom and using the constant gravitomagnetic deformation matrix.

\begin{lemma}Let the \textbf{deformation matrix}   $B_{ij} $ be given as,
\begin{equation*}
B^1_{ij}=m\,\varepsilon_{ijk}\Omega^{k},
\end{equation*}
where  $\Omega^{k}=(2GM/r_{hs}) {\omega}^{k}$ is the constant gravitomagnetic field for a hollow spinning
sphere.   Then, the deformed Hamiltonian
of the hydrogen atom, denoted by $(H^{A})_{B^1,\mathbf{X}}$, becomes the
Hamiltonian of the gravitomagnetic Zeemaneffect, i.e.
\begin{align*}
(H^{A})_{B^1,\mathbf{X}}\Psi&= -\frac{1}{2m}{(P_j+m\,\varepsilon_{jkl}\,\Omega^{l}X^k)(P^j+m\,\varepsilon^{jrs}\,\Omega_{s}X_r)}\Psi +\frac{e^2}{
\vert\mathbf{X}\vert}\Psi, \qquad \Psi \in   \mathcal{E}\\&= -\frac{1}{2m}{(P_j+m\,h_{j})(P^j+m\,h^{j})}\Psi +\frac{e^2}{
\vert\mathbf{X}\vert}\Psi
\\&
=H_{0}\Psi-h_jP^j\Psi+\frac{e^2}{
\vert\mathbf{X}\vert}\Psi+\mathcal{O}(h^2)
.
\end{align*}
 \begin{proof} The only difference to the proof of Lemma \ref{cnt2} consists in the choice of the deformation matrix, i.e. the proof is equivalent. 
\end{proof}
\end{lemma}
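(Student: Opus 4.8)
The plan is to mirror the proof of Lemma \ref{cnt2} almost verbatim, the sole modification being the numerical value assigned to the deformation matrix, so that essentially all the work has already been done earlier in the paper. First I would decompose the hydrogen Hamiltonian as $H^{A}=H_{0}+e^{2}/\vert\mathbf{X}\vert$ and invoke the linearity of the warped convolution in its operator argument, which is immediate from Definition (\ref{defcop}) since the map $A\mapsto A_{B,\mathbf{X}}$ is built from the linear operator-valued integral $\int\alpha_{By}(\,\cdot\,)\,dE(y)$. This gives $(H^{A})_{B^{1},\mathbf{X}}=(H_{0})_{B^{1},\mathbf{X}}+(e^{2}/\vert\mathbf{X}\vert)_{B^{1},\mathbf{X}}$. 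The key observation is then that the Coulomb term is inert under a deformation carried out with the coordinate operator: because the unitaries $V(B^{1}y)=e^{i(B^{1}y)_{k}X^{k}}$ are functions of $\mathbf{X}$ alone and $[X_{i},X_{j}]=0$, the adjoint action fixes the potential, $\alpha_{B^{1}y}(e^{2}/\vert\mathbf{X}\vert)=e^{2}/\vert\mathbf{X}\vert$ for every $y$; integrating this constant integrand against the spectral resolution and using $\int dE(y)=\mathds{1}$ yields $(e^{2}/\vert\mathbf{X}\vert)_{B^{1},\mathbf{X}}=e^{2}/\vert\mathbf{X}\vert$. Thus only the free part is affected, and the requisite well-definedness of the surviving oscillatory integral is supplied by Proposition \ref{wcfh}, so no fresh convergence estimate is needed.

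For the free part I would reuse Equation (\ref{defh}) with $\mathbf{Q}=\mathbf{X}$ and collapse the commutators via the CCR (\ref{ccr}), obtaining the minimal-substitution form $(H_{0})_{B^{1},\mathbf{X}}=-\frac{1}{2m}(P_{j}+(B^{1}X)_{j})(P^{j}+(B^{1}X)^{j})$ exactly as in Equation (\ref{defhx}). Inserting $B^{1}_{ij}=m\,\varepsilon_{ijk}\Omega^{k}$ then identifies the induced shift with the gravitomagnetic vector potential of the hollow spinning sphere, $(B^{1}X)_{j}=B^{1}_{jk}X^{k}=m\,\varepsilon_{jkl}\Omega^{l}X^{k}=m\,h_{j}$, where $\Omega^{l}$ commutes with $X^{k}$ as a c-number (cf. Equation (\ref{gmp})).

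Finally I would expand the product, noting that $[P_{j},h^{j}]\propto\varepsilon^{jkl}\eta_{jk}\Omega_{l}=0$ by the antisymmetry of $\varepsilon$ contracted against the symmetric metric, so that $h_{j}$ and $P^{j}$ commute and may be reordered at no cost; this produces $-\frac{1}{2m}(P_{j}+m\,h_{j})(P^{j}+m\,h^{j})=H_{0}-h_{j}P^{j}-\frac{m}{2}h_{j}h^{j}$. Discarding the term quadratic in $h$ (equivalently quadratic in $G$) in the weak-field approximation and re-attaching the untouched Coulomb term produces the three claimed equalities. There is no genuine obstacle here, since the parallel to Lemma \ref{cnt2} does all the heavy lifting; the only point demanding care is the bookkeeping of the linear approximation, namely being explicit that the $\mathcal{O}(h^{2})$ contribution is dropped and that the reordering of $P$ and $h$ introduces no additional first-order term, so that the coefficient of $h_{j}P^{j}$ is fixed to $-1$, in agreement with the gravitomagnetic coupling in $H_{GEM}$.
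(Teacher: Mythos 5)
Your proposal is correct and follows essentially the same route as the paper: the paper's proof simply defers to Lemma \ref{cnt2}, whose content is precisely your two observations that the Coulomb term is inert under a coordinate-operator deformation (since $[X_i,X_j]=0$) and that the free part reduces to the minimal-substitution form (\ref{defhx}) with $B^1_{jk}X^k=m\,h_j$. Your additional bookkeeping --- linearity of the warped convolution, $[P_j,h^j]=0$ for the reordering, and explicitly discarding the $\mathcal{O}(h^2)$ term --- only spells out details the paper leaves implicit.
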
$\,$\newline
Analogously to the magnetic case, the presence of
a constant gravitomagnetic field will split the spectral lines of the hydrogen atom. In this case the splitting depends on the strength of the gravitomagnetic field. This phenomenon is the  \textbf{ gravitomagnetic Zeemaneffect}, \cite{Ma}. Note that the linear approximation works just fine, since the quadratic terms of the gauge field are already neglected  in the magnetic Zeemaneffect, \cite{St}. \newline\newline
In the next proposition we couple the two constant forces by a double deformation. 

\begin{proposition}
Let the deformation  matrix $B^{1}_{ij}$   be given as 
\begin{equation*}
 B^{1}_{ij}=m\,\varepsilon_{ijk}\Omega^{k},
\end{equation*}
where  $\Omega^{k}=(2GM/r_{hs}) \,{\omega}^{k}$ is the constant gravitomagnetic field for a hollow
spinning
sphere and let the deformation  matrix $B^{2}_{ij}$  be given as 
\begin{equation*}
 B^{2}_{ij}=- (e/2) \,\varepsilon_{ijk}B^{k},
\end{equation*}
where $B^k$ is a homogeneous magnetic field.\newline\newline
Then, the  deformed Hamiltonian
of the hydrogen atom, $\left((H^A)_{B^{1},\mathbf{X}}\right)_{B^{2},\mathbf{X}}$ becomes the Hamiltonian of the 
 Zeemaneffect generated by a
a constant external magnetic and gravitomagnetic 
field, i.e.
\begin{equation*}
 (H^A)_{B^{1}+B^{2},\mathbf{X}}\Psi=H_{GEM}\Psi,\qquad \Psi\in\mathcal{E.} 
\end{equation*}

\end{proposition}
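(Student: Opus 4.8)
The plan is to reduce the iterated deformation of the hydrogen Hamiltonian to a single deformation with the summed matrix, and then to exploit that the Coulomb potential is inert under deformation with the coordinate operator. Both deformation steps are generated by the \emph{same} coordinate operator $\mathbf{X}$, so the implementing unitary is $V(y)=e^{iy_kX^k}$ in each step. First I would invoke Proposition \ref{wc2}, which gives additivity of warped convolutions under a common unitary representation, to obtain
\begin{equation*}
\left((H^{A})_{B^{1},\mathbf{X}}\right)_{B^{2},\mathbf{X}}\Psi=(H^{A})_{B^{1}+B^{2},\mathbf{X}}\Psi,\qquad\Psi\in\mathcal{E}.
\end{equation*}
The well-definedness of this iterated oscillatory integral for the unbounded operator $H^{A}$ is guaranteed by the Remark preceding Proposition \ref{cnt3}, together with the boundedness estimate underlying Proposition \ref{wcfh}; the order of the two deformations is immaterial since $B^{1}+B^{2}=B^{2}+B^{1}$.

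Next I would split $H^{A}=H_{0}+e^{2}/\vert\mathbf{X}\vert$ and use linearity of the deformation map,
\begin{equation*}
(H^{A})_{B^{1}+B^{2},\mathbf{X}}=(H_{0})_{B^{1}+B^{2},\mathbf{X}}+\left(e^{2}/\vert\mathbf{X}\vert\right)_{B^{1}+B^{2},\mathbf{X}}.
\end{equation*}
The central observation, exactly as in the proof of Lemma \ref{cnt2}, is that the Coulomb term is a function of $\mathbf{X}$ alone; since $[X_i,X_j]=0$, the adjoint action $\alpha_{(B^{1}+B^{2})y}$ implemented by $V\!\left((B^{1}+B^{2})y\right)=e^{i((B^{1}+B^{2})y)_kX^k}$ leaves it fixed. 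Passing to the spectral representation (Equation (\ref{WC})) and using $\int dE(y)=\mathds{1}$ then yields
\begin{equation*}
\left(e^{2}/\vert\mathbf{X}\vert\right)_{B^{1}+B^{2},\mathbf{X}}=\int dE(y)\,e^{2}/\vert\mathbf{X}\vert=e^{2}/\vert\mathbf{X}\vert,
\end{equation*}
so that only the free part is genuinely deformed.

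Finally I would insert the already-established result for the deformed free Hamiltonian. Feeding $B^{1}_{ij}=m\,\varepsilon_{ijk}\Omega^{k}$ and $B^{2}_{ij}=-(e/2)\varepsilon_{ijk}B^{k}$ into Equation (\ref{defhx}) reproduces, precisely as in the electromagnetic--gravitomagnetic coupling proposition, the double minimal substitution $P_j\mapsto P_j-eA_j+mh_j$, where $-A^{j}$ is the Landau gauge field (Equation (\ref{gf1})) and $h^{j}$ the gravitomagnetic potential (Equation (\ref{gmp})). Adding back the invariant Coulomb term and discarding the $\mathcal{O}(h^{2})$ contribution in the linear approximation gives $H_{GEM}$ (Equation (\ref{cnt5})) supplemented by $e^{2}/\vert\mathbf{X}\vert$, as claimed. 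I do not anticipate a genuine obstacle: the only step requiring care is the invariance of the Coulomb potential under the coordinate deformation, and this follows immediately from the commutativity of the position operators, so the argument is essentially an assembly of Proposition \ref{wc2}, Lemma \ref{cnt2}, and the preceding coupling proposition.
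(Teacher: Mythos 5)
Your proposal is correct and follows essentially the same route as the paper: the paper's own proof is a one-line remark that the coordinate deformation commutes with the Coulomb potential and then defers to the earlier coupling proposition, which is exactly the combination of Proposition \ref{wc2} (additivity $B^{1}+B^{2}$), invariance of $e^{2}/\vert\mathbf{X}\vert$ under $\alpha_{By}$, and the already-computed deformed free Hamiltonian that you assemble explicitly. Your write-up merely makes these steps explicit, so there is nothing to correct.
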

\begin{proof} Since the deformation with the coordinate operator commutes with the potential term of the hydrogen atom, the proof is analog to the proof of Proposition  \ref{cnt3}.
\end{proof}
 \subsection{Arbitrary static gauge field}
 By only assuming the principle of Galilei-invariance the author in \cite{Ja} succeeded in
deriving the minimally coupled Hamiltonian plus a potential. Thus by demanding that our deformed
Hamiltonian respects the Galilei-invariance, we have to add a potential. This is justified since we showed that the deformation of a free
Hamiltonian induces  electromagnetism and gravitomagnetism. Moreover, in \cite{Ja} it was shown that the
gauge field and the potential can only depend on the coordinates. Therefore, our deformation
covers the whole range of abelian gauge fields, since we can induce such fields by choosing
functions of the coordinate operator to obtain an arbitrary gauge field. This fact is used in the
next sections to induce a variety of physical effects. \newline\newline
  In the next proposition we show the importance of adding
a potential to the Hamiltonian and for this purpose we need the four-momentum given as
\begin{equation*}
 P_{\mu}=\left( H_{0},P_{i}\right)=\left(-P_{k}P^{k}/(2m)+g\,\phi(\mathbf{X}), P_{i} \right),
\end{equation*}
where $\phi(\mathbf{X})$ is the electromagnetic potential  $\phi_E$ or the  gravitoelectromagnetic potential  $-\phi_G$.  Moreover, $g$ is a coupling constant given by $e$ in the electromagnetic case and by $-m$ in the gravitoelectromagnetic case.
\begin{proposition}\label{belqdef} Let the gauge field induced
by deformation of the Hamiltonian $(H_{0})_{B,\mathbf{Q}}$ (see Proposition \ref{wcfh}) be defined as
\begin{equation}\label{gfgf}
-g\,A_{r}(\mathbf{X}):=(B Q(\mathbf{X}))_k\partial_r Q^k(\mathbf{X}),
\end{equation}  
where $\mathbf{A}$ is the electromagnetic   or the  gravitoelectromagnetic vector potential.
Then, the commutator  of the deformed momentum  vectors gives the spatial part of the   field
strength tensor $F_{ij}$, 
\begin{align}\label{ncqp}
[P^{B,\mathbf{Q}} _{i} {,} P^{B,\mathbf{Q}} _{j}]&=  -ig\, F_{ij}(\mathbf{X}).
\end{align} 
Furthermore, the commutator of the deformed Hamiltonian with the deformed momentum gives the
Lorentz force $F^{L }_{j}$, i.e.
\begin{align}\label{ncqp1}
[P^{B,\mathbf{Q}} _{0} {,} P^{B,\mathbf{Q}} _{j}]&=   -g  [\phi(\mathbf{X}),P_{j}]
-  i \frac{g}{m}\, F_{jk}(\mathbf{X}) P^k_{B,\mathbf{Q}}= i F^{L }_{j}.
\end{align}
Moreover, by using the Jacobi identities for the commutator relations between the deformed momentum and Hamiltonian operators  the \textbf{homogeneous Maxwell-equations} emerge.

\end{proposition}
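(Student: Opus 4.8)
The plan is to reduce the entire statement to elementary commutator algebra by first recognising that the deformed momentum is nothing but the \textbf{minimally coupled} momentum. Starting from the explicit deformation (\ref{defp}) of Proposition \ref{lfps}, namely $P^j_{B,\mathbf{Q}}\Phi=(P^j+i(BQ)^k[Q_k,P^j])\Phi$, I would evaluate the commutator with the CCR (\ref{ccr}), which for a function of the coordinate operator gives $[Q_k,P^j]=-i\,\partial^j Q_k$. Inserting this and comparing with the definition (\ref{gfgf}) of the gauge field, the term $i(BQ)^k[Q_k,P^j]$ collapses to $-g\,A^j(\mathbf{X})$, so that
\begin{equation*}
P^j_{B,\mathbf{Q}}=P^j-g\,A^j(\mathbf{X}).
\end{equation*}
This minimal-coupling form is the pivot of the proof: every remaining assertion becomes a commutator identity for the operators $P_j-g A_j$.

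For (\ref{ncqp}) I would compute $[P^{B,\mathbf{Q}}_i,P^{B,\mathbf{Q}}_j]=[P_i-gA_i,P_j-gA_j]$ and use that the ordinary momenta commute and that the potentials $A_i(\mathbf{X})$ commute among themselves, so that only the two mixed terms survive. With $[A_j,P_i]=-i\partial_i A_j$ these combine into $-ig(\partial_i A_j-\partial_j A_i)=-ig\,F_{ij}(\mathbf{X})$, exactly the claimed spatial field-strength tensor; thus the noncommutativity of the deformed momenta is governed precisely by $F_{ij}$.

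For the Lorentz-force identity (\ref{ncqp1}) I would first note that the potential $g\phi(\mathbf{X})$ is a function of the coordinate operator alone and therefore commutes with $V(y)=e^{iy_kQ^k}$; hence it is invariant under the $\mathbf{Q}$-deformation and, together with the theorem preceding Proposition \ref{lfxs},
\begin{equation*}
P^{B,\mathbf{Q}}_0=-\frac{1}{2m}P^{B,\mathbf{Q}}_k P_{B,\mathbf{Q}}^k+g\,\phi(\mathbf{X}).
\end{equation*}
Commuting this with $P^{B,\mathbf{Q}}_j$ splits into two pieces. The kinetic part, via the Leibniz rule $[AB,C]=A[B,C]+[A,C]B$ together with (\ref{ncqp}), produces the magnetic term proportional to $F_{jk}\,P^k_{B,\mathbf{Q}}$, i.e. the quantum analogue of $\mathbf{v}\times\mathbf{B}$; the potential part yields $-g[\phi(\mathbf{X}),P_j]$, i.e. the electric term $\sim g\,\partial_j\phi$. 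Adding the two reproduces the Lorentz force $iF^{L}_j$.

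Finally, for the homogeneous Maxwell equations I would feed the result (\ref{ncqp}) into the Jacobi identity for the three deformed momenta,
\begin{equation*}
[[P^{B,\mathbf{Q}}_i,P^{B,\mathbf{Q}}_j],P^{B,\mathbf{Q}}_k]+\text{cyclic}=0,
\end{equation*}
and use $[F_{ij},P^{B,\mathbf{Q}}_k]=[F_{ij},P_k]=-i\partial_k F_{ij}$ (again $F_{ij}$ commuting with $A_k$) to turn it into the Bianchi identity $\partial_k F_{ij}+\partial_i F_{jk}+\partial_j F_{ki}=0$, which is $\boldsymbol{\nabla}\cdot\mathbf{B}=0$; the analogous Jacobi identity carrying one factor $P^{B,\mathbf{Q}}_0$ gives Faraday's law, trivially satisfied for the static fields considered here. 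The step I expect to be most delicate is not the Jacobi manipulation but the bookkeeping in the Lorentz-force step: because $F_{jk}$ and the velocity $P^k_{B,\mathbf{Q}}$ do not commute, one must keep track of operator ordering (a symmetric, Hermitian combination) and of the sign conventions forced by the metric $\eta_{jk}$ and by the convention $P_k=i\partial/\partial x^k$ of (\ref{ccr}), in order to land exactly on the Hermitian Lorentz-force operator rather than on a combination differing from it by a gradient term.
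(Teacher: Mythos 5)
Your proposal is correct and follows essentially the same route as the paper: identify $P^j_{B,\mathbf{Q}}=P^j-g\,A^j(\mathbf{X})$ as minimal coupling via Equation (\ref{defp}) and the definition (\ref{gfgf}), compute the spatial commutator directly to get $-ig\,F_{ij}$, rewrite the deformed Hamiltonian as $-\frac{1}{2m}P^{B,\mathbf{Q}}_kP_{B,\mathbf{Q}}^k+g\,\phi(\mathbf{X})$ and apply the Leibniz rule for the Lorentz force, then extract the homogeneous Maxwell equations from the two Jacobi identities. Your closing remark on operator ordering is handled in the paper by invoking $[P^{B,\mathbf{Q}}_j,F^{jk}(\mathbf{X})]=0$, which plays the role of the symmetric ordering you anticipate.
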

 \begin{proof}According to Proposition \ref{wcfh}, the deformation of
$H_{0}$ by an operator  $ \mathbf{Q} $ is given as follows 
\begin{align*}
(H_{0})_{B,\mathbf{Q}}\Psi
& 
= -\frac{1}{2m} \biggl(    
P_j +i(BQ)_s [Q^s,P_j]    \biggr ) \biggl(    
P^j +i(BQ)_r [Q^r,P^j]    \biggr ) \Psi +g\,\phi(\mathbf{X})\Psi\\&
=  \frac{1}{2m} \biggl(    
\mathbf{P}  -g\,\mathbf{A} (\mathbf{X})\biggr )^2 \Psi
+g\,\phi(\mathbf{X})\Psi,
  \end{align*}
where we used the fact that the potential can only be a function of the coordinate operator and thus remains invariant under deformation. Since in the former propositions we identified deformations induced by the coordinate
operator with the gauge field $-g\,\mathbf{A}$, the induced term in this deformation can  be
identified with a general static gauge field. Next, we calculate  Commutator (\ref{ncqp}). The deformed momentum operator is
given in Proposition \ref{defp} as
\begin{align*}
P^j_{B,\mathbf{Q}} \Psi
&=   (
P^{j}+i(BQ)_s [Q^s,P^j]   )\Psi  \\&
=   (
P^{j}   -g\,A^j (\mathbf{X})  )\Psi  ,
\end{align*}
where in the last lines we identified the gauge field by Equation (\ref{gfgf}).
Now by using the commutator relations $[X_{i},X_{j}]=0$  and the fact that the commutator
$[Q^s,P^j]$ is again only a function of the coordinate operator, we obtain the
following commutator relations for the deformed momentum operator
\begin{align*}
[P^k_{B,\mathbf{Q}} ,P^l_{B,\mathbf{Q}} ]&=
  -g\, [P^k, A^l (\mathbf{X})]  -k\leftrightarrow l\\&=-ig\,F^{kl} (\mathbf{X}).
\end{align*}
Now we can also rewrite the deformed Hamiltonian $(H_{0})_{B,\mathbf{Q}}  $ as 
\begin{align*}
 (H_{0})_{B,\mathbf{Q}}  =-\left(1/2m \right)\left(P^j_{B,\mathbf{Q}} P_j^{B,\mathbf{Q}
}\right)+g\,\phi(\mathbf{X}).
\end{align*}
This form of the deformed Hamiltonian simplifies  the calculation of Commutator (\ref{ncqp1}).
\begin{align*}
 [(H_{0})_{B,\mathbf{Q}}  ,P^k_{B,\mathbf{Q}} ]&=-\left(1/2m \right) [\left(P^j_{B,\mathbf{Q}} P_j^{B,
\mathbf{Q}
}\right),P^k_{B,\mathbf{Q}} ] +g\, [\phi(\mathbf{X}),P^k]\\&=  i \frac{g}{m}\, P_j^{B,\mathbf{Q}} F^{jk}(\mathbf{X})-i g\, \partial^k \phi(\mathbf{X})
\\&=  i g\,\left( E^k(\mathbf{X})- \varepsilon^{kjl}V^{B,\mathbf{Q}} _j B_l(\textbf{X})\right)
,
\end{align*}
In the last lines we used the commutator relation $[P_j^{B,\mathbf{Q}} ,F^{jk}(\mathbf{X})]=0$ and
the Heisenberg-equation to identify the velocity operator with the deformed momentum. Moreover, the fields $\mathbf{E}$ and $\mathbf{B}$ are the electromagnetic or the gravitoelectromagnetic fields, depending on the considered case. It is not
surprising that the Lorentz force is obtained by calculating the commutator between the deformed
Hamiltonian and the momentum, since it gives the equations of motion for the
deformed system. This in turn is properly identified with a particle coupled to an electromagnetic or gravitoelectromagnetic
force. 
\newline\newline
Next, we use the Jacobi identities for the commutators of the deformed momentum operator $P_{\mu}$ to obtain the homogeneous Maxwell-equations.
From the Jacobi identity for the spatial part we have
\begin{align*}
[P^{B,\mathbf{Q}} _{k} {,} [P^{B,\mathbf{Q}} _{i} {,} P^{B,\mathbf{Q}} _{j}]]+\text{cyclic}  &=-
 ig [P^{B,\mathbf{Q}} _{k} {,} F_{ij}(\mathbf{X})]+\text{cyclic}\\&=
g\,\partial_{k} F_{ij}(\mathbf{X})+\text{cyclic}\\&=0.
\end{align*} The last equation is the relativistic expression for the spatial part of the homogeneous Maxwell-equations. To obtain the homogeneous Maxwell-equations involving $F_{0j}=E_j$ we look at the other Jacobi identity of the deformed momentum, i.e. 
\begin{align}\label{me}
[P^{B,\mathbf{Q}} _{0} {,} [P^{B,\mathbf{Q}} _{i} {,} P^{B,\mathbf{Q}} _{j}]]
+[P^{B,\mathbf{Q}} _{i} {,} [P^{B,\mathbf{Q}} _{j} {,} P^{B,\mathbf{Q}} _{0}]]-i\leftrightarrow j
 =0.
\end{align}
Let us take a look at the first term, 
 \begin{align*}
[P^{B,\mathbf{Q}} _{0} {,} [P^{B,\mathbf{Q}} _{i} {,} P^{B,\mathbf{Q}} _{j}]]&= -ig
[P^{B,\mathbf{Q}} _{0} {,} F_{ij}(\mathbf{X}) ]\\&
= -g\,\partial_{0}F_{ij}(\mathbf{X}) -   \frac{g}{m}\, P^{B,\mathbf{Q}}_{k}\partial^k F_{ij}(\mathbf{X}),
\end{align*}
where  we used the Heisenberg-equation. The other two terms in Equation (\ref{me}) give 
 \begin{align*}
- i [P^{B,\mathbf{Q}} _{i} {,}\, \, F^L_j]-i\leftrightarrow j &= -g\,(\partial_i E_j(\mathbf{X}) -\partial_j E_i(\mathbf{X}) )- \frac{g}{m}\, P^k_{B,\mathbf{Q}}\left(\partial_i  F_{jk}(\mathbf{X})-\partial_j  F_{ik}(\mathbf{X})\right).
\end{align*}
By summing the two terms and using the spatial part of the homogeneous Maxwell-equations we obtain
 \begin{align*}
[P^{B,\mathbf{Q}} _{0} {,} [P^{B,\mathbf{Q}} _{i} {,} P^{B,\mathbf{Q}} _{j}]]+ \text{cyclic}
=-g\,(\partial_i E_j(\mathbf{X}) -\partial_j E_i(\mathbf{X}) )  -g\,\partial_{0}F_{ij}(\mathbf{X}).
\end{align*}

  \end{proof}

\begin{remark}
In \cite{FD} the author recollected an argument given by Feynman about  the most general force, compatible with the naive commutation relations
\begin{align*}
[X_{i},X_{j}]=0,\qquad m[X_{i},\dot{X}_{j}]=i\delta_{ij}.
\end{align*}
It turns out that by assuming a noncommutative structure for the velocity operators, one obtains the electromagnetic force. Furthermore, by using the Jacobi identities the Maxwell equations follow. Therefore, in a sense the deformations with warped convolutions reproduce the result of Feynman in a more sophisticated language, and moreover, the technique gives concrete representations of the operators that generate the electromagnetic and gravitoelectromagnetic fields. Thus, by the virtue of the deformation technique we have a deeper understanding of the surprising result of Feynman.  
\end{remark}$\,$\newline
A  crucial point is implied in the last proposition. The linear field equations of general relativity or the Maxwell equations   emerge from a well-defined deformation procedure. The emergence is owed to the Jacobi identities. This observation gives an insightful hint how to receive substantial field equations from a deformation procedure. 

 \subsection{Gravitomagnetic Moyal-Weyl plane}
By using the deformations techniques we are able to understand how to generate the Landau quantization. Thus, by using the same procedures  and by setting the  deformation parameter  equal to a constant \textbf{gravitomagnetic field} (see Lemma \ref{llq1}) we also obtain a Landau quantization in the gravitomagnetic case.\newline\newline
Analogously to the Landau quantization in the magnetic case, we can solve the eigenvalue equation of the  deformed Hamiltonian $(H_{0})_{B,X}$ with deformation matrix $B_{ij}=  m\varepsilon_{ijk}\,\Omega \,e^k $, i.e. with a constant homogeneous gravitomagnetic field in $x$-direction. The eigenvalue problem can be solved by diagonalizing the Hamiltonian and we obtain the following energy eigenvalues 
 \begin{equation*}
E_{B,n}=\frac{p_1^2}{2m}+\biggl(n+\frac{1}{2}\biggr)\omega_{B}, \qquad p_1\in\mathbb{R},\qquad
n\in \mathbb{N},
\end{equation*}
where the frequency of this harmonic oscillator is given by $\omega_{B}=2\Omega$. Therefore,   quantum mechanical particles in a constant gravitomagnetic field can only occupy orbits with discrete energy values. This effect is the gravitomagnetic analogue of the Landau quantization in the magnetic case.
\newline\newline 
In Section \ref{pmwfd} we identified the noncommutative coordinates of the Landau-quantization with  the deformed coordinate operator.  In the same manner  we can obtain a physical Moyal-Weyl plane for  the constant gravitomagnetic case. For this purpose, we set the
deformation matrix of  $\mathbf{X}^{\theta ,\mathbf{P}}$ equal to the inverse   of a  \textbf{constant gravitomagnetic field} times the coupling constant $m$. The deformed coordinate operators   $\mathbf{X}_{B^{-1},\mathbf{P}}$ are then equal to  the guiding center
coordinates of an electron in the lowest Landau
level and   are given by
\begin{equation*}
X^{B^{-1} ,\mathbf{P}}_{i}= X_{i}+\frac{1}{2}(B^{-1})_{ik}P^{k},
\end{equation*}
with  commutator relations
\begin{equation*}
 [X^{B^{-1} ,\mathbf{P}}_{i},X^{B^{-1} ,\mathbf{P}}_{j}]=  i(B^{-1})_{ij},
\end{equation*}
and deformation matrix $B_{ij}=m\varepsilon_{ijk}\Omega^{k}$. From the commutator relations  we have the following uncertainty relations 

\begin{equation*}
 (\Delta X^{B^{-1} ,\mathbf{P}}_{2}) (\Delta X^{B^{-1} ,\mathbf{P}}_{3})\geq \hbar/(m\Omega).
\end{equation*}\newline 
These uncertainty relations have the precise meaning that  coordinates of an electron can not be measured more accurately than the area $2\pi\,\hbar/(m\Omega)$.
This is a physical effect that we predicted from a deformation procedure and may be experimentally verified.
 
\section{Conclusion and Outlook} 
We obtained a variety of  physical effects, in a QM context, containing electromagnetism  and
gravitomagnetism. These effects were generated by the deformation procedure warped convolutions. Thus, in this sense those two
fundamental forces can be understood as deformations of free theories. The fundamental
\textbf{deformation parameters}, for those forces,  are given by the elementary electric charge $ {e}$
and by the gravitational constant ${G}$. Therefore, not only $\hbar$ and $c$ can be used to deform the
classical case (Galilei group) but also $e$ and $G$ play the role of deformation parameters
responsible for electromagnetism and gravitomagnetism. \newline\newline
The deformation also shed a new light on the dynamics of a quantum mechanical particle in the
presence of electromagnetic   and gravitomagnetic forces. Namely, it gives a \textbf{systematic derivation} of a non-relativistic Hamiltonian in the presence
of electromagnetic and gravitomagnetic effects. 
\newline \newline
Another striking implication of the deformation procedure is the deduction of a physical
Moyal-Weyl plane. This plane is generated   
from the gravitomagnetic field times the mass and thus the strength of noncommutativity of the
coordinates, in the lowest Landau level, is given by the inverse of the
constant $m\Omega/\hbar$. This effect was purely deduced from deformation and could be one of the
first effects that is theoretically predicted by deformation and  verified experimentally. This
would be a major physical breakthrough for deformation theory.
\newline \newline
To obtain electric and gravitoelectric fields in the framework of deformation, we would have to extend the definition of warped convolutions. For example, it is not possible to obtain the Stark effect from  deformation of the free Hamiltonian with warped convolutions. The reason herein lies in the fact, that the deformation leaves the spectrum of the operator invariant. Since the free Hamiltonian has a positive spectrum and the Hamiltonian of the Stark effect has the whole real line as spectrum,  a deformation respecting spectrum conditions  can not reproduce such an effect.  
 \newline \newline
Another line of work involves the extension of warped convolutions to a non-abelian setting. If this succeeds, we would be able to reproduce the weak and strong interaction as deformations. In this context, it seems intuitive to lift such deformations to a QFT setting.  Thus, recasting the fundamental forces as deformations and most likely simplifying calculations involving interactions.  
\newline \newline
These and many other lines of work, in deformation theory, are to this date open and provide a broad, interesting
and exciting area of research.
\section*{Acknowledgments} 
I would like to thank Prof. K. Sibold for   insightful remarks. I am particularly grateful
for the assistance given by Dr. G. Lechner. Moreover, I
thank S. Alazzawi, A. Andersson,  S. Pottel, J. Zschoche and Hyun Seok Yang   for many  discussions.     
I would like 
to express my  great appreciation to Dr. Z. Much for an extensive proofreading.

 \bibliographystyle{alpha}
\bibliography{allliterature}

 \end{document}